\newcommand{\pyear}[1]{\citeauthor{#1} (\citeyear{#1})}
\newcommand{\ah}[0]{\bm{\vec{a}}}
\newcommand{\norm}[1]{\left\lvert#1\right\rvert}
\newtheorem{theorem}{Theorem}[section]
\newtheorem{lemma}[theorem]{Lemma}
\newtheorem{prop}[theorem]{Proposition}
\theoremstyle{definition}
\newtheorem{corollary}{Corollary}[theorem]
\newtheorem{definition}{Definition}[section]
\title{A Simple Adaptive Procedure Converging to Forgiving Correlated Equilibria}
\author{%
  Hugh Zhang\thanks{Originally published in 2020 as a senior honors thesis under the supervision of Gabriel Carroll at https://purl.stanford.edu/hk596cg1085.} \\
  Department of Economics and Computer Science \\
  Stanford University \\
  \texttt{hughz@cs.stanford.edu}
}
\begin{document}

\maketitle

\begin{abstract}
Simple adaptive procedures that converge to correlated equilibria are known to exist for normal form games \parencite{hart2000simple}, but no such analogue exists for extensive-form games.
Leveraging inspiration from \citeauthor{zinkevich2008regret} (\citedate{zinkevich2008regret}), we show that any internal regret minimization procedure designed for normal-form games can be efficiently extended to finite extensive-form games of perfect recall.
Our procedure converges to the set of forgiving correlated equilibria, a refinement of various other proposed extensions of the correlated equilibrium solution concept to extensive-form games \parencite{forges1986approach, forges1986correlated, von2008extensive}. 
In a forgiving correlated equilibrium, players receive move recommendations only upon reaching the relevant information set instead of all at once at the beginning of the game.
Assuming all other players follow their recommendations, each player is incentivized to follow her recommendations \emph{regardless of whether she has done so at previous infosets}.
The resulting procedure is completely decentralized: players need neither knowledge of their opponents' actions nor even a complete understanding of the game itself beyond their own payoffs and strategies.
\end{abstract}

\section{Introduction}
\label{sec:intro}
Algorithmic game theory aims to discover simple, efficient algorithms that converge to equilibria.
While in many cases, such algorithms are useful in their own right, they can also contribute towards more classical uses of game theory such as modelling real-world behavior.
For example, if researchers can show the nonexistence of an algorithm that efficiently computes a given equilibrium, it raises questions on how well that equilibrium could model real-world behavior, since how would the real-world agents themselves discover it?
Conversely, the existence of simple procedures that efficiently converge to an equilibrium increases the plausibility that such an equilibrium could arise naturally.

To model such procedures, we imagine that players play a game repeatedly and choose their actions at each iteration of the game based on a simple learning procedure over their past results. To model selfish agents, we are specifically interested in \emph{uncoupled} learning procedures, where each player's actions do not depend on \emph{another} player's payoffs.
Blackwell's Approachability Theorem (\citeyear{blackwell1956analog}) was later used to show that uncoupled dynamics could lead to a Nash equilibrium for two-player zero-sum games if each player chose their strategies with probabilities proportional to their regret for not choosing that strategy in the past. Separately, \pyear{hannan1957approximation} discovered a similar result via a modified version of fictitious play. More recently, \citeauthor{hart2000simple} (\citeyear{hart2000simple}) extended this result to converge to correlated equilibria for any normal-form game with any number of players. A separate paper also showed the nonexistence of any uncoupled procedure that converged to a Nash equilibrium in the general case \parencite{hart2003uncoupled}.

This paper asks whether we can achieve similar results for extensive-form games, where players play sequentially rather than simultaneously.
One recent successful approach has been counterfactual regret minimization \parencite{zinkevich2008regret}, which converges to the Nash equilibrium for two-player zero-sum extensive-form games.
As such, counterfactual regret minimization has been used extensively in building modern computer poker agents, which have recently surpassed humans in many popular forms of poker \parencite{bowling2015heads, moravvcik2017deepstack, brown2018superhuman, brown2019superhuman}.
Counterfactual regret minimization uses a tree decomposition approach to break up the original game into a large number of ``minigames'', each of which can be independently optimized with any external regret minimization algorithm (e.g. \citeauthor{blackwell1956analog} (\citedate{blackwell1956analog})). We discuss counterfactual regret minimization as well as other methods of decomposing global regrets into local regrets in Section~\ref{sec:globaldecomposition}.

Directly attempting to extend counterfactual regret minimization to converge to the set of correlated equilibria of general extensive-form games runs into the hurdle that describing the set of correlated equilibria may require an exponentially large number of incentive constraints. We consider several possible generalizations of the correlated equilibrium to extensive-form games and focus primarily on a new solution concept we call the forgiving correlated equilibrium.
In a nutshell, the forgiving correlated equilibrium differs from the canonical correlated equilibrium by recommending actions only when a player encounters an infoset instead of revealing the complete recommended strategy at the beginning of the game.
Assuming all other players follow their recommendations, each player is incentivized to follow her recommendations \emph{regardless of whether she has done so at previous infosets}.
We discuss the forgiving correlated equilibrium and its cousins in more detail in Section~\ref{sec:equilibria}.

Our procedure follows in the footsteps of \citeauthor{zinkevich2008regret} (\citeyear{zinkevich2008regret}) by decomposing ``global'' regrets (which concern strategies over the entire game) into ``local'' regrets (which concern only actions at a single infoset). Our approach relies on both the tree decomposition technique that they use as well as a variation of the well known one-shot deviation principle.
The resulting  procedure converges to the set of forgiving correlated equilibria for any finite game of perfect recall.
At each timestep, each player observes her past strategies and payoffs and uses this information to determine her future strategy.
Our procedure is completely decentralized: players need neither knowledge of their opponents' actions nor even a complete understanding of the game itself beyond their own payoffs and strategies.
Previously, \pyear{huang2008computing} give a polynomial-time algorithm converging to the set of extensive-form correlated equilibria, but no algorithm is known to converge to the set of autonomous correlated equilibria (or by extension the set of forgiving correlated equilibria) in polynomial time.

This work was done concurrently and independently from \pyear{celli2020no}.
Their work leverages the recent framework of regret circuits \parencite{farina2018regret} whereas we rely primarily on inspiration from \pyear{zinkevich2008regret} and the one-shot deviation principle, but their resulting procedure shares substantial similarities with the low memory variation of our procedure that converges to extensive-form correlated equilibria (Section~\ref{sec:lowmemory}).

\section{Notation}
\label{sec:notation}
We consider finite extensive-form games of perfect recall.
For the present, we assume that there are no chance moves, though we address this issue separately in Section~\ref{chance}.
Using standard notation for extensive-form games, let $N$ denote the number of players in the game.
Let $\Sigma$ be the set of pure strategy profiles.
Let $\mathcal{I}$ denote the set of all infosets. For any $s \in \Sigma$ and $I \in \mathcal{I}$, let $s(I)$ denote the action the strategy profile $s$ recommends at infoset $I$.
Let the player function $P$ map from an infoset to the player who plays at that infoset. Let the action function $A$ map from an infoset to the set of valid actions at that infoset. In our setting, imagine that players play a game repeatedly. At each timestep $t = 1, 2, \dots$, the players each individually choose their strategies, and we represent their collective choice with the strategy profile $s^t$.

We now introduce the following notation:
Let $O(s, I, a) \in \{0, 1\}$ denote if both the following conditions are true: 
\begin{enumerate}
    \item $s(I) = a$
    \item Assuming everyone plays according to the strategy profile $s$, infoset $I$ is on the path of play.
\end{enumerate}
We can additionally define $O(s, I) = \sum_{a \in A(I)} O(s, I, a)$. Colloquially, we say that an infoset action pair $(I, a)$ is \emph{observed} at timestep t, if $O(s^t, I, a) = 1$.
We may also say that an infoset $I$ is observed at timestep $t$ if $O(s^t, I) = 1$.

Let ($R$ for reachable) $R(s, I) \in \{0, 1\}$ represent whether player $P(I)$ can play in such a way that $I$ is observed, assuming that everyone else plays according to strategy profile $s$.

Let $s | _{I\rightarrow b}$ denote the strategy profile where all players play according to $s$ except that player $P(I)$ plays as necessary to reach infoset $I$ and plays action $b$ at infoset $I$.
Similarly, let $s_I$ be the strategy profile where everyone plays according to $s$ except that player $P(I)$ plays to reach $I$.
If player $P(I)$ cannot play to reach infoset $I$, everyone plays according to $s$ all the time.

Let (N for next) $N(s, I, a)$ represent the infoset where player $P(I)$ next plays after $I$ assuming everyone plays according to $s |_{I\rightarrow a}$.
If player $P(I)$ cannot play to reach infoset $I$ or she does not play again after infoset $I$, we let $N(s, I, a) = \emptyset$.
Let $Succ(I, a) = \bigcup_{s \in \Sigma} \, \{ N(s, I, a) \}$, which intuitively denotes all possible next infosets player $P(I)$ can play at assuming she plays action $a$ at infoset $I$.
Let $DES(I)$ represent all possible infosets which are owned by player $P(I)$ and that are descendants of $I$ (including $I$ itself).

Call the ancestry of an infoset $I$ the sequence of infosets and actions that lead to $I$ from the start of the game from the perspective of player $P(I)$. Specifically, denote it as the longest possible sequence

\[
I_1, a_1, I_2, a_2 \dots I_n, a_n, I_{n+1}
\]

\noindent where for all $k$, $a_k \in A(I_k)$, $P(I) = P(I_k)$, $I_k \in Succ(I_{k-1}, a_{k-1})$ and $I_{n+1} = I$. By the assumption of perfect recall, this ancestry is unique for each infoset. Now, given the above ancestry of an infoset $I$ define the signal history function $S(s, I)$ as the length $n+1$ vector.
\[
\begin{pmatrix} s(I_1) & \dots & s(I_n) & s(I) \end{pmatrix}
\]

Additionally, denote the first $n$ elements of $S(s, I)$ as the partial signal history represented by
\[
S(s, I)_{-1} = \begin{pmatrix} s(I_1) & \dots & s(I_n) \end{pmatrix}
\]

Denote the set of all possible signal histories at an infoset $I$ as $S(I) = \bigcup_{s \in \Sigma} S(s, I)$

When discussing either autonomous correlated equilibria or forgiving correlated equilibria, for any player $i$, define a deviation plan $d$ as a function mapping every infoset $I$ owned by player $i$ and signal history $\ah \in S(I)$ pair to an action $a \in A(I)$.
For a player $i$, call the set of all her deviation plans $\mathcal{D}_i$. 

Let $u(I, s_1, s_2)$ represent the utility player $P(I)$ receives if everyone plays in the following manner.
All players other than $P(I)$ always play according to strategy profile $s_2$.
Player $P(I)$ plays whatever moves are necessary to reach infoset $I$ and then plays according to strategy profile $s_1$ afterward, including at infoset $I$.

Note that because we assume perfect recall, this uniquely determines the strategy of all players as the sequence of actions to reach $I$ is unique.
Additionally, we often abbreviate $u(I, s_2) = u(I, s_2, s_2)$, which intuitively denotes the payoff player $P(I)$ receives when everyone plays according to strategy profile $s_2$ except that $P(I)$ deviates only as necessary to guarantee that she reaches infoset $I$ if it is possible to do so.
If it is not possible for player $P(I)$ to reach infoset $I$ if everyone else plays according to $s_2$, let $u(I, s_1, s_2) = 0$. 
When discussing autonomous correlated equilibria or forgiving correlated equilibria, let $u(I, d, s)$ represent the utility of player $P(I)$ if she plays to reach infoset $I$ and then follows deviation plan $d$ afterwards (including at infoset $I$), while all other players follow strategy profile $s$ all the time. If it is not possible for player $P(I)$ to reach infoset $I$, all players always play according to strategy profile $s$.

\section{Correlating Signals in Extensive-Form Games}
\label{sec:equilibria}
In this section, we define and analyze various equilibria which extend the correlated equilibria solution concept to extensive-form games.
As with the canonical correlated equilibrium, all the following solution concepts have a mediator select a strategy profile $s$ according to some probability distribution $h$ before play begins.
However, unlike the canonical correlated equilibrium, these solution concepts all have the mediator show player $P(I)$ her action recommendation $a \in A(I)$ only when she reaches infoset $I$ instead of revealing the entire strategy recommendation at the beginning of the game.
In a normal-form game, all these solution concepts collapse onto each other and the canonical correlated equilibrium since there is only one infoset to consider.

As a result of how the recommendations are revealed, each of these equilibria (with the exception of forgiving correlated equilibria) forms a superset of the set of correlated equilibria since at any given point in time, each player has only seen a partial strategy recommendation and has less information to decide whether she wants to deviate from the recommendation.

The canonical correlated equilibrium can pose incentive constraints that grow in number exponential to the number of infosets for extensive-form games, and it is currently not known if any algorithm can discover such an equilibrium in polynomial time.
We review three potential candidate solution concepts that generalize the correlated equilibrium to extensive-form games by posing only a polynomial number of (relevant) incentive constraints relative to the number of infosets in the game: the agent form correlated equilibrium, the extensive-form correlated equilibrium, and the autonomous correlated equilibrium. We additionally propose a new solution concept, the forgiving correlated equilibrium, that refines all the above solution concepts.

To ensure standard notation, we give definitions for all equilibria considered in the following sections, as well as intuitive explanations for what they denote.

\subsection{Agent Form Correlated Equilibrium}

The agent form correlated equilibrium was proposed by \citeauthor{forges1986correlated} (\citeyear*{forges1986correlated}). Imagine constructing a variation of an extensive-form game where each infoset is controlled by a separate agent. The correlated equilibrium of that game variant forms an agent correlated equilibrium of the original game.
An alternative equivalent characterization of an agent correlated equilibrium is that there is no profitable one-shot deviation from any action recommendation given at any infoset observed with positive probability.
Of all the generalizations of the correlated equilibrium to extensive-form games we consider, the agent form correlated equilibrium poses the weakest incentive constraints on the players.

\theoremstyle{definition}
\begin{definition}{Agent Form Correlated Equilibrium}

Let $h$ be a probability distribution over strategy profiles. $h$ forms an \emph{agent form correlated equilibrium} if it satisfies:
\[
\max_{I \in \mathcal{I}} \max_{a \in A(I)} \max_{b \in A(I)} \sum_{s \in \Sigma} h(s) O(s, I, a) (u(I, s_{I\rightarrow b}, s) - u(I, s)) \leq 0
\]
\end{definition}

\subsection {Extensive-Form Correlated Equilibrium}
\label{efce}

The extensive-form correlated equilibrium \parencite{von2008extensive} poses a stricter constraint than the agent form correlated equilibrium in that players are not restricted to one-shot deviations. However, upon deviating from the mediator's recommendation, they will no longer receive any future recommendations from the mediator (other players still do).
This differs from the standard correlated equilibrium since in that setting, each player receives her recommendations for the whole game at the beginning of the game before making any of her action choices.

A correlating signal $h$ forms an extensive-form correlated equilibrium if every player is incentivized to follow their signal, assuming that all the other players will follow their signals regardless of her choice.
If $h$ is the signal for an extensive-form correlated equilibrium, then for all infoset action pairs $(I, a)$, at least one of the following two conditions holds

 \begin{enumerate}
   \item Infoset $I$ is never reached if players play according to the recommendation signal $h$ or action recommendation $a$ is never given by $h$ at infoset $I$ (i.e. $\forall s, h(s) O(s, I, a) = 0)$
   \item Conditional on observing infoset $I$ and recommendation $a$, player $P(I)$ would prefer to follow her recommendation and continue receiving information than to unilaterally deviate to any strategy $s'$.
  
 \end{enumerate}
 
\noindent Based on this observation, we can formally define the extensive-form correlated equilibrium.

\theoremstyle{definition}
\begin{definition}{Extensive-Form Correlated Equilibrium}

Let $h$ be a probability distribution over strategy profiles. $h$ forms an \emph{extensive-form correlated equilibrium} if it satisfies:
\[
\max_{I \in \mathcal{I}} \max_{a \in A(I)} \max_{s' \in \Sigma} \sum_{s \in \Sigma} h(s) O(s, I, a) (u(I, s', s) - u(I, s)) \leq 0
\]
\end{definition}

\subsection{Autonomous Correlated Equilibria}

\noindent The autonomous correlated equilibrium (\cite{forges1986approach}) poses an even stricter incentive constraint compared to the extensive-form correlated equilibrium in that signals will continue to be given even after a player deviates from their recommendations. Thus, players will have more information after deviating than their counterparts in an extensive-form correlated equilibrium since the signals received after deviating could reveal information on what the other players are receiving as recommendations. Because we need to account for this additional information, our definition for the autonomous correlated equilibrium (and its refinement, the forgiving correlated equilibrium), use the concept of a deviation plan defined in Section~\ref{sec:notation}.

\begin{definition}
Let $h$ be a probability distribution over strategy profiles. A signal $h$ forms an \emph{autonomous correlated equilibrium} if it satisfies:
\[
\max_{I \in \mathcal{I}} \max_{\ah \in S(I)} \max_{d \in \mathcal{D}_{P(I)}} \sum_{s \in \Sigma} h(s) O(I, s) \mathds{1}(S(I, s) = \ah) (u(I, d, s) - u(I, s)) \leq 0
\]
\end{definition}

To explain the definition using the English language, for every possible infoset and signal history combination ($I, \ah$), summing over all possible strategies weighted by how often the signal gives a specific set of recommendations ($\sum_{s\in \Sigma} h(s)$), for all instances where it is possible to receive action recommendation $\ah$ at infoset $I$ if all players follow their recommendations ($\mathds{1}(S(I, s) = \ah)O(I, s)$), player $P(I)$ has no incentive to deviate from her strategy recommendations at either the current point or any point in the future $(u(I, d, s) - u(I, s))$.

\subsection{Forgiving Correlated Equilibria}

The forgiving correlated equilibrium poses the strictest incentive constraint of all the considered extensions of correlated equilibria to extensive-form games.
As in an autonomous correlated equilibrium, recommendations continue to be given to each player even after they deviate from their recommendations, but a forgiving autonomous correlated equilibrium additionally requires that these recommendations continue to guide the player in the right direction.
As such, in contrast to the autonomous correlated equilibrium and the extensive-form correlated equilibrium, the mediator is ``forgiving'' in that she will continue to give each player the best possible recommendation even after that player has already ignored her advice.

Concretely, we say a signal forms a forgiving correlated equilibrium if, assuming everyone else always follows their signals, you are incentivized to follow your signal, regardless of whether you have done so in previous infosets this game.
Stated another way, deviating from the mediator's recommendation at any point in time will never be profitable, even if you have previously already deviated from the mediator's recommendations.
Note that this is related but not the same as the concepts of subgame perfection and sequential rationality, as we continue to assume that at most one player will ever deviate from her recommendations.

\label{facedefinition}
\begin{definition}
Let $h$ be a probability distribution over strategy profiles. A signal $h$ forms a \emph{forgiving correlated equilibrium} if it satisfies:
\[
\max_{I \in \mathcal{I}} \max_{\ah \in S(I)} \max_{d \in \mathcal{D}_{P(I)}} \sum_{s \in \Sigma} h(s) R(I, s) \mathds{1}(S(I, s) = \ah) (u(I, d, s) - u(I, s)) \leq 0
\]
\end{definition}

Notice the strong similarities between the definitions for forgiving correlated equilibria and autonomous correlated equilibria.
The two definitions, in fact, only differ by one character as the forgiving correlated equilibrium uses the reachability condition ($R$) instead of the observability condition $O$. As such, in the autonomous correlated equilibrium, we constrain only infosets that are on the path of play but in a forgiving correlated equilibrium, we constrain all infosets that are \emph{reachable} under a strategy $s$ with the $R(s, I)$ term assuming only player $P(I)$ deviates from her recommendations. Additionally, note that the forgiving correlated equilibrium refines all the above solution concepts, so any procedure that converges to the set of forgiving correlated equilibria also automatically converges to the set of any of the other equilibria in question.

\subsection{Equilibria in the Real World}

We consider two broad use cases for such equilibria in the real world. One is to model observed behavior, and the other is to advise it.
Extensive-form games are commonly used to model real-world scenarios where actions are taken sequentially instead of all at once.
Like the canonical correlated equilibrium, the autonomous correlated equilibrium (and the forgiving correlated equilibrium by extension) allows for the correlating signal to arise ``naturally’’ for such games.
As a concrete example, we imagine a group of ancient politicians quarantined (separately) in the same general area during a long-forgotten pandemic.
Since there is no internet and nonessential mail is slow during such times, communicating directly with one another is difficult.
Nevertheless, each politician is required to vote on urgent government matters received via priority mail that still does function in such times. 
Every morning, each politician receives a briefing from the government with information on the results of yesterday’s vote and the bill they must vote on today, as well as a collection of private information compiled by their hardworking staff. They must mail in their vote by the end of the day.
Despite the lack of direct communication, politicians can still coordinate if they have access to a common signal. For instance, if they all have windows, the coordinated policy could even be something as simple as all politicians in a coalition voting no if it rains that day.

To return from the analogy to the language of game theory, each infoset is the information each player receives at the start of each day combined with their memory of past briefings, and the exogenous variation of the weather forms the correlating signal. As such, solution concepts such as the forgiving correlated equilibrium have the potential to help explain the behavior of agents playing extensive-form games in the real world.

Yet another purpose can be to advise behavior. 
To a rough approximation, Google Maps can be viewed as a mediator in a forgiving correlated equilibrium.
When we are instructed to take a certain route to get to our destination, it is usually in our interest to follow that route. Additionally, even if we deviate from the recommended route, Google Maps will recalculate the best route given our decision rather than shutting off and leaving us to find our own way to the destination, as the mediator in an extensive-form correlated equilibrium would be apt to do. One can also imagine other applications where such equilibria may prove useful for advising behavior, such as routing Internet traffic or coordinating global shipping.

\section{From Global To Local Regrets}
\label{sec:globaldecomposition}
Having defined the equilibria we hope to converge to, we now proceed to describe our procedure.
Directly attempting to satisfy the incentive constraints describing the various equilibria as defined in Section~\ref{sec:equilibria} is likely challenging. Even simply computing many of the inequalities that define such equilibria by directly enumerating all possible strategy deviations would require exponential amounts of time, because if we let $M = \norm{\max_{I \in \mathcal{I}} \mathcal{A}(I)}$, the number of strategy profiles in the game is on the order of $\norm{\mathcal{I}}^{M}$.
We call such regrets ``global'' if they operate over strategy deviations of the entire game.
To make such computations tractable, one approach, popularized by \pyear{zinkevich2008regret}, is to break down these ``global'' regrets into ``local'' regrets which operate only on single actions at a given infoset. 
If these local regrets can be minimized independently and there are only a polynomial number of (relevant) regrets to minimize, then the runtime can be converted to a much more tractable $O( \texttt{poly}(M \times \norm{\mathcal{I}}))$.

We start by reviewing the tree decomposition approach used by \pyear{zinkevich2008regret} in the counterfactual regret minimization procedure before discussing how we can achieve similar results using a variation of the one-shot deviation principle.

\subsection{Counterfactual Regret Minimization}
\label{cfr}

The following is an overview of counterfactual regret minimization, adapted slightly for our purposes. We walk through the procedure of \pyear{zinkevich2008regret} with the exception that we add an additional $O(s^t, I^P, a)$ term at each step, which allows us to filter for only the timesteps where the pair $(I^P, a)$ was observed.

\subsection{Regrets}

We assume $P(I^P) = P(I)$ and that $I^P$ is an ancestor of $I$.

\theoremstyle{definition}
\begin{definition}{External Regrets ($ER$)}

For any $b \in A(I)$ and $a \in A(I^P)$, we can define the external regret as follows:

\[
ER^T(I^P, a, I) = \max\limits_{s^B \in \Sigma} \frac{1}{T} \sum_{t=1}^T O(s^t, I^P, a) R(s^t, I) (u(I, s^B, s^t) - u(I, s^t))
\]
\[
ER^{T+}(I^P, a, I) = \max (0, ER^T(I^P, a, I))
\]
\end{definition}

Intuitively, $ER$ is the external regret for the subgame rooted at infoset $I$ (from player $P(I)$'s perspective), filtering only for timesteps where the infoset action pair $(I^P, a)$ was observed.
\theoremstyle{definition}
\begin{definition}{Counterfactual Regrets ($CFR$)}

For any $b \in A(I)$ and $a \in A(I^P)$, we can define the $CFR$ (counterfactual regret) as follows:

\[
CFR^{T}(I^P, a, I, b) = \frac{1}{T} \sum_{t=1}^T O(s^t, I^P, a) R(s^t, I) (u(I, s^t |_{I \rightarrow b}, s^t) - u(I, s^t))
\]
\[
CFR^{T+}(I^P, a, I) = \max\limits_{b \in A(I)} \max(0, CFR^{T}(I^P, a, I, b))
\]
\end{definition}

Notice the substantial similarities between $CFR$ and $ER$. While $ER$ regrets are ``global'' in that they allow deviations described by an entire strategy, $CFR$ regrets are ``local'' in that they only allow only changes of a single action. In the following sections, we show how $ER$ can be bounded by a sum of a collection of $CFR$ regrets.

\subsection{Tree Decomposition}

Note that via rearrangement, we have

\begin{align*}
& \max\limits_{s^B \in \Sigma} \: (u(I, s^B, s^t) - u(I, s^t))\\
= \: & \max\limits_{s^B \in \Sigma}  \: (u(I, s^B, s^t) - u(I, s^t) + u(I, s^t |_{I\rightarrow s^B(I)}) - u(I, s^t |_{I\rightarrow s^B(I)})) \\
= \: & \max\limits_{s^B \in \Sigma}\:(u(I, s^t |_{I\rightarrow s^B(I)}) - u(I, s^t)) + (u(I, s^B, s^t) - u(I, s^t |_{I\rightarrow s^B(I)})) \\
\leq \: & \max\limits_{s^B \in \Sigma}\:(u(I, s^t |_{I\rightarrow s^B(I)}) - u(I, s^t)) + \max\limits_{s^B \in \Sigma}\:(u(I, s^B, s^t) - u(I, s^t |_{I\rightarrow s^B(I)}))
\end{align*}

This is the core technique behind counterfactual regret minimization as well as our decomposition in Section~\ref{mainprop}.

\subsection{CFR Inequality}

The main purpose of this inequality is to show that we can minimize our $ER$ regrets by simply minimizing $CFR$ regrets at each infoset that is a descendent of $I^P$.

\begin{prop}
\[
ER^{T}(I^P, a, I) \leq CFR^{T+}(I^P, a, I) + \max\limits_{b \in A(I)} \sum_{I’ \in Succ(I, b)} ER^{T}(I^P, a, I’)
\]
\end{prop}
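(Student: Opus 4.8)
The plan is to start from the definition of $ER^{T}(I^P, a, I)$ and apply, timestep by timestep, the rearrangement identity displayed just above the statement. Fix a comparator strategy $s^B \in \Sigma$ and set $b = s^B(I) \in A(I)$. Inside each summand $O(s^t, I^P, a)\, R(s^t, I)\,(u(I, s^B, s^t) - u(I, s^t))$, insert and cancel $u(I, s^t|_{I\rightarrow b})$, splitting the term into a \emph{root} part $u(I, s^t|_{I\rightarrow b}) - u(I, s^t)$ (a one-shot deviation at $I$) and a \emph{continuation} part $u(I, s^B, s^t) - u(I, s^t|_{I\rightarrow b})$ (deviations confined to infosets strictly below $I$). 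Summing over $t$, taking the maximum over $s^B$, and then using $\max(X + Y) \le \max X + \max Y$ exactly as in the displayed rearrangement reduces the claim to bounding the two resulting maxima separately.

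For the root part, observe that $O(s^t, I^P, a)\, R(s^t, I)\,(u(I, s^t|_{I\rightarrow b}) - u(I, s^t))$ depends on $s^B$ only through $b = s^B(I)$. Hence its time-average, maximized over $s^B$, equals $\max_{b \in A(I)} CFR^{T}(I^P, a, I, b)$, which is at most $CFR^{T+}(I^P, a, I)$ by definition. This part is immediate.

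The work is in the continuation part, where I would run a one-shot-deviation-style argument organized around the ``next infoset'' function $N$. Fix $s^B$, set $b = s^B(I)$, and for each $t$ let $I'_t = N(s^t, I, b)$. When $I'_t = \emptyset$ the player never moves again, so $u(I, s^B, s^t) = u(I, s^t|_{I\rightarrow b})$ and the summand vanishes (consistently, such $t$ also contribute nothing on the right-hand side). When $I'_t = I' \neq \emptyset$, I would verify from the definitions and perfect recall the two utility identities $u(I, s^B, s^t) = u(I', s^B, s^t)$ and $u(I, s^t|_{I\rightarrow b}) = u(I', s^t)$ — both expressing that ``play to reach $I$, then play $b$'' coincides with ``play to reach $I'$'' since $I'$ is the unique $P(I)$-owned infoset immediately following $(I,b)$ — together with the reachability fact $R(s^t, I) = R(s^t, I') = 1$. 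I would also check the selection property that for $I' \in Succ(I,b)$ one has $R(s^t, I') = 1$ if and only if $N(s^t, I, b) = I'$ (any other element of $Succ(I,b)$ lies on a branch incompatible with $s^t$ and so is unreachable under $s^t$). Combining these, the time-average of the continuation summand equals $\sum_{I' \in Succ(I,b)} \frac{1}{T}\sum_{t=1}^{T} O(s^t, I^P, a)\, R(s^t, I')\,(u(I', s^B, s^t) - u(I', s^t))$; since $I^P$ remains an ancestor of each such $I'$ with $P(I^P) = P(I')$, every inner term is at most $ER^{T}(I^P, a, I')$, and maximizing over $s^B$ (equivalently over the induced $b$ together with the continuation) yields $\max_{b \in A(I)} \sum_{I' \in Succ(I,b)} ER^{T}(I^P, a, I')$. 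Adding the two bounds gives the proposition.

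I expect the main obstacle to be precisely the bookkeeping in the continuation step: making the utility identities $u(I, s^B, s^t) = u(I', s^B, s^t)$ and $u(I, s^t|_{I\rightarrow b}) = u(I', s^t)$ and the $R$-selection property fully rigorous, since these are where perfect recall and the exact meanings of $N$, $Succ$, and the ``reach-then-play'' convention for $u(I,\cdot,\cdot)$ are really being used; everything else reuses the rearrangement already carried out in the text.
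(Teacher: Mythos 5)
Your proposal is correct and follows essentially the same route as the paper's proof: the same insert-and-cancel tree decomposition, the same identification of the root term with $CFR^{T+}(I^P,a,I)$, and the same rewriting of the continuation term as a sum over $I' \in Succ(I,b)$ followed by pushing the max inside the sum. The only difference is that you spell out the utility identities and the $R$-selection bookkeeping that the paper asserts in a single line; that extra care is sound and not a deviation in approach.
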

\begin{proof}

\begin{align*}
ER^{T}(I^P, a, I) = & \; \max\limits_{s^B \in \Sigma} \frac{1}{T} \sum_{t=1}^T O(s^t, I^P, a) R(s^t, I) (u(I, s^B, s^t) - u(I, s^t)) \\
\leq & \; \max\limits_{s^B \in \Sigma}\:\frac{1}{T} \sum_{t=1}^T O(s^t, I^P, a) R(s^t, I) (u(I, s^t |_{I\rightarrow s^B(I)}) - u(I, s^t)) \: + \\
& \;  \max\limits_{s^B \in \Sigma} \frac{1}{T} \sum_{t=1}^T  O(s^t, I^P, a) R(s^t, I)  (u(I, s^B, s^t) - u(I, s^t |_{I\rightarrow s^B(I)}))
\end{align*}

This is an application of the tree decomposition technique discussed previously. Now, examine the two maximization terms separately. The first term is

\begin{align*}
& \max\limits_{s^B \in \Sigma}\:\frac{1}{T} \sum_{t=1}^T O(s^t, I^P, a) R(s^t, I) (u(I, s^t |_{I\rightarrow s^B(I)}) - u(I, s^t)) \\
= & \max\limits_{b \in A(I)}\:\frac{1}{T} \sum_{t=1}^T O(s^t, I^P, a) R(s^t, I) (u(I, s^t |_{I\rightarrow b}) - u(I, s^t)) \\
\leq & \; CFR^{T+}(I^P, a, I)
\end{align*}

The second term is

\begin{align*}
& \max\limits_{s^B \in \Sigma} \frac{1}{T} \sum_{t=1}^T  O(s^t, I^P, a) R(s^t, I)  (u(I, s^B, s^t) - u(I, s^t |_{I\rightarrow s^B(I)})) \\
= & \max\limits_{b \in A(I)} \max\limits_{s^B \in \Sigma} \frac{1}{T} \sum_{t=1}^T  O(s^t, I^P, a) R(s^t, I)  (u(I, s^B|_{I\rightarrow b}, s^t) - u(I, s^t |_{I\rightarrow b})) \\
= &  \max\limits_{b \in A(I)} \max\limits_{s^B \in \Sigma} \sum_{I' \in Succ(I, b)} \frac{1}{T} \sum_{t=1}^T  O(s^t, I^P, a) R(s^t, I')  (u(I', s^B, s^t) - u(I', s^t)) \\
\leq &  \max\limits_{b \in A(I)} \sum_{I' \in Succ(I, b)} \max\limits_{s^B \in \Sigma}  \frac{1}{T} \sum_{t=1}^T  O(s^t, I^P, a) R(s^t, I')  (u(I', s^B, s^t) - u(I', s^t)) \\
= &  \max\limits_{b \in A(I)} \sum_{I' \in Succ(I, b)} ER(I^P, a, I')
\end{align*}

\end{proof}

\begin{corollary}
\[
ER^{T+}(I^P, a, I) \leq CFR^{T+}(I^P, a, I) + \max\limits_{b \in A(I)} \sum_{I’ \in Succ(I, b)} ER^{T+}(I^P, a, I’)
\]
\end{corollary}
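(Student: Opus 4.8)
The plan is to deduce the corollary from the proposition by a short case analysis, exploiting two monotonicity facts: that $CFR^{T+}$ is nonnegative by construction, and that $ER^{T+}(I^P, a, I') = \max(0, ER^{T}(I^P, a, I')) \geq ER^{T}(I^P, a, I')$ together with $ER^{T+}(I^P, a, I') \geq 0$ for every successor infoset $I'$.

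First I would handle the trivial case. If $ER^{T}(I^P, a, I) \leq 0$, then $ER^{T+}(I^P, a, I) = 0$, while the right-hand side is a sum of nonnegative quantities (the $CFR^{T+}$ term is nonnegative, and each $ER^{T+}(I^P, a, I')$ is nonnegative), so the inequality holds.

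Next I would handle the main case $ER^{T}(I^P, a, I) > 0$, where $ER^{T+}(I^P, a, I) = ER^{T}(I^P, a, I)$. Here I apply the proposition to bound $ER^{T}(I^P, a, I)$ by $CFR^{T+}(I^P, a, I) + \max_{b \in A(I)} \sum_{I' \in Succ(I, b)} ER^{T}(I^P, a, I')$, and then replace each $ER^{T}(I^P, a, I')$ by the larger quantity $ER^{T+}(I^P, a, I')$; since this replacement only increases every summand (hence every inner sum, hence the maximum over $b$), the desired bound follows.

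There is essentially no obstacle here — the only thing to be slightly careful about is that the substitution $ER^{T} \to ER^{T+}$ is applied uniformly inside the $\max_b \sum_{I'}$, which is valid because replacing each term of a sum by something at least as large preserves the inequality after taking the outer maximum. I would simply chain the two displayed bounds and conclude.
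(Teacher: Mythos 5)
Your proof is correct and matches the argument the paper leaves implicit (the corollary is stated without proof as an immediate consequence of the proposition): the case split on the sign of $ER^{T}(I^P,a,I)$, the nonnegativity of $CFR^{T+}$ and of each $ER^{T+}(I^P,a,I')$, and the termwise replacement $ER^{T}\le ER^{T+}$ inside the $\max_b\sum_{I'}$ are exactly what is needed. Nothing is missing.
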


We can expand out the above inequality recursively, and write

\begin{prop}

\label{cfrdecomp}
\[
ER^{T+}(I^P, a, I) \leq \sum_{I' \in DES(I)} CFR^{T+}(I^P, a, I')
\]
\end{prop}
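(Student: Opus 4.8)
The plan is to prove the statement by induction on the ``player depth'' of $I$, meaning the length of the longest chain of infosets owned by $P(I)$ that descends from $I$ in the game tree; equivalently one may run well-founded induction on the strict descendant relation restricted to $DES(I)$. Since the game is finite this measure is finite, and for any $I' \in Succ(I,b)$ the infoset $I'$ is a strict $P(I)$-descendant of $I$, so its player depth is strictly smaller — this is what makes the recursion terminate. Throughout, $I^P$ and $a$ stay fixed (note $I^P$ remains an ancestor of every $I' \in Succ(I,b)$ and $P(I^P)=P(I')=P(I)$, so the Corollary applies to $I'$ with the same $I^P,a$), and every quantity of the form $CFR^{T+}(\cdot)$ or $ER^{T+}(\cdot)$ that appears is nonnegative by construction.

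For the base case, suppose $I$ has no proper descendant owned by $P(I)$. Then $Succ(I,b) \subseteq \{\emptyset\}$ for each $b \in A(I)$, where we treat the degenerate term $ER^{T+}(I^P,a,\emptyset)$ as $0$ and $DES(\emptyset)$ as empty, and $DES(I)=\{I\}$. The Corollary then gives $ER^{T+}(I^P,a,I) \le CFR^{T+}(I^P,a,I) = \sum_{I' \in DES(I)} CFR^{T+}(I^P,a,I')$ directly. For the inductive step, assume the statement holds for every infoset of smaller player depth. Applying the Corollary, then substituting the inductive hypothesis $ER^{T+}(I^P,a,I') \le \sum_{I'' \in DES(I')} CFR^{T+}(I^P,a,I'')$ into each summand, we obtain for the maximizing choice of $b$
\[
ER^{T+}(I^P, a, I) \;\leq\; CFR^{T+}(I^P, a, I) \;+\; \sum_{I' \in Succ(I, b)} \; \sum_{I'' \in DES(I')} CFR^{T+}(I^P, a, I'').
\]
To finish, it suffices to observe the set inclusion
\[
\{I\} \;\cup\; \bigcup_{I' \in Succ(I, b)} DES(I') \;\subseteq\; DES(I),
\]
since each $CFR^{T+}$ term is nonnegative, so re-indexing the sum on the right over all of $DES(I)$ can only increase it, yielding $ER^{T+}(I^P,a,I) \le \sum_{I'' \in DES(I)} CFR^{T+}(I^P,a,I'')$ and completing the induction. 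The inclusion itself is immediate: any $I'' \in DES(I')$ with $I' \in Succ(I,b)$ is a descendant of $I'$ hence of $I$, and is owned by $P(I'')=P(I')=P(I)$; and $I \in DES(I)$ by definition.

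The main obstacle is more bookkeeping than genuine difficulty: one must fix a well-founded measure that provably drops when passing from $I$ to any $I' \in Succ(I,b)$, and handle the boundary case $N(s,I,a)=\emptyset$ so that an empty $Succ(I,b)$ contributes nothing. It is worth noting that one does \emph{not} need the sharper fact that the sets $\{I\}$ and $\{DES(I')\}_{I' \in Succ(I,b)}$ are pairwise disjoint (which does hold, by the tree structure and perfect recall); only nonnegativity of the $CFR^{T+}$ terms is used to enlarge the index set, so the proof goes through even though the right-hand side may over-count.
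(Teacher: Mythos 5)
Your proof is correct and is essentially a careful formalization of the paper's own argument: the paper likewise expands the Corollary recursively and invokes perfect recall to conclude that each infoset appears at most once in the resulting sum, while you make the recursion explicit as a well-founded induction on player depth, handle the $N(s,I,a)=\emptyset$ boundary, and keep the maximizing branch $b$ rather than bounding $\max_b$ by $\sum_b$ as the paper does (your version is slightly tighter for that reason). One closing remark is wrong as stated, however: you claim that pairwise disjointness of $\{I\}$ and the sets $DES(I')$ for $I' \in Succ(I,b)$ is not needed and that the proof goes through ``even though the right-hand side may over-count.'' Nonnegativity of $CFR^{T+}$ only lets you \emph{add} terms when enlarging the index set to $DES(I)$; it does not let you absorb duplicates. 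If some $I''$ lay in $DES(I'_1) \cap DES(I'_2)$ for distinct successors $I'_1, I'_2 \in Succ(I,b)$, the double sum would count $CFR^{T+}(I^P,a,I'')$ twice while $\sum_{I'' \in DES(I)}$ counts it once, and the final inequality could fail. So the at-most-once property --- which you correctly note does hold by perfect recall and the tree structure, and which is precisely what the paper's last sentence invokes --- is genuinely needed. Since it holds, your proof stands.
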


\begin{proof}
\begin{align*}
ER^{T+}(I^P, a, I) \leq \; & CFR^{T+}(I^P, a, I) + \max\limits_{b \in A(I^P)} \sum_{I’ \in Succ(I^P, b)} ER^{T+}(I^P, a, I’) \\
\leq \; &  CFR^{T+}(I^P, a, I) +  \sum\limits_{b \in A(I^P)} \sum_{I’ \in Succ(I^P, b)} ER^{T+}(I^P, a, I’) \\
\leq \; &  \sum_{I' \in DES(I)}  CFR^{T+}(I^P, a, I’)
\end{align*}

where the last step expands out $ER$ recursively and uses the assumption of perfect recall to show that each $I'$ appears at most once in the sum.
\end{proof}
In their original paper, \pyear{zinkevich2008regret} minimized (slightly modified) $CFR$ regrets independently at each $I'$ using Blackwell's regret minimization. This resulted in convergence to the set of coarse correlated equilibria, which in the special case of two-player zero-sum games, corresponds to the Nash equilibrium of the game. 

\subsection{Regret Decomposition via the One-Shot Deviation Principle}

An alternative method of decomposing regrets is via a variation of the one-shot deviation principle for subgame perfect Nash equilibria. Proposition~\ref{faceredefinition} asserts that if there is no profitable ``local'' deviation from the signal (even if you have already deviated before), there is no profitable ``global'' deviation from the signal either.

\begin{prop}
\label{faceredefinition}
A signal $h$ satisfies the inequalities given in Definition~\ref{facedefinition} (definition of a forgiving correlated equilibrium) if and only if it satisfies
\[
\max_{I \in \mathcal{I}} \max_{\ah \in S(I)} \max_{b \in A(I)} \sum_{s\in \Sigma} h(s) R(I, s) \mathds{1}(S(I, s) = \ah) (u(I, s_{I\rightarrow b}) - u(I, s_{I})) \leq 0
\]
\end{prop}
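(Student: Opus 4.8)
The plan is to prove the two directions separately. First I would dispatch the easy implication ($\Rightarrow$): a single-action deviation is just a special deviation plan. Fix $I\in\mathcal{I}$, $\ah\in S(I)$ and $b\in A(I)$, and let $d\in\mathcal{D}_{P(I)}$ be the plan that outputs $b$ at the pair $(I,\ah)$ and, at every other pair $(J,\ah')$, outputs the last coordinate of $\ah'$ (``follow the recommendation''). For any $s$ with $S(I,s)=\ah$, following $d$ after reaching $I$ amounts to playing $b$ at $I$ and the recommended action $s(J)$ at every later infoset $J$ of $P(I)$, which is exactly what $s_{I\rightarrow b}$ prescribes; hence $u(I,d,s)=u(I,s_{I\rightarrow b})$, while $u(I,s)=u(I,s_I)$ straight from the definitions. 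Substituting this $d$ into the inequality of Definition~\ref{facedefinition} yields the displayed one-shot inequality for this $I,\ah,b$, and these were arbitrary.

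For the harder implication ($\Leftarrow$) I would prove, by induction on the ``$P$-height'' of an infoset $J$ --- the length of the longest chain $J=K_0,K_1,\dots$ with $K_{i+1}\in Succ(K_i,a_i)$ for some $a_i$, all owned by $P(J)$, which is finite since the game tree is --- the stronger claim that for every $J$, every $\ah_J\in S(J)$ and every $d\in\mathcal{D}_{P(J)}$,
\[
\sum_{s\in\Sigma} h(s)\,R(J,s)\,\mathds{1}(S(J,s)=\ah_J)\,\bigl(u(J,d,s)-u(J,s)\bigr)\le 0,
\]
which is exactly Definition~\ref{facedefinition} once $J,\ah_J,d$ range over everything. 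Writing $b=d(J,\ah_J)$, the engine is the same rearrangement used in the tree decomposition above,
\[
u(J,d,s)-u(J,s)=\bigl(u(J,s_{J\rightarrow b})-u(J,s_J)\bigr)+\bigl(u(J,d,s)-u(J,s_{J\rightarrow b})\bigr),
\]
where I use $u(J,s)=u(J,s_J)$. Summed against $h(s)\,R(J,s)\,\mathds{1}(S(J,s)=\ah_J)$, the first bracket is $\le 0$ by hypothesis. For the second bracket I would argue from the one-shot structure --- only $P(J)$'s move at $J$ has changed, and $P(J)$ does not move again until $J':=N(s,J,b)$ --- that it equals $u(J',d,s)-u(J',s)$ when $J'\ne\emptyset$ and is $0$ otherwise.

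It then remains to sum the second bracket. Swapping the sum over $s$ with the finite sums over $J''\in Succ(J,b)$ and over $c=s(J'')\in A(J'')$, the bookkeeping identities
\[
R(J,s)\,\mathds{1}\bigl(N(s,J,b)=J''\bigr)=R(J'',s),\qquad R(J'',s)\,\mathds{1}(S(J,s)=\ah_J)\,\mathds{1}(s(J'')=c)=R(J'',s)\,\mathds{1}\bigl(S(J'',s)=(\ah_J,c)\bigr)
\]
--- both following from perfect recall together with the fact that every $J''\in Succ(J,b)$ is an immediate $P(J)$-successor of $J$ via $b$ --- collapse each inner sum to $\sum_s h(s)\,R(J'',s)\,\mathds{1}(S(J'',s)=(\ah_J,c))\,(u(J'',d,s)-u(J'',s))$, which is $\le 0$ by the induction hypothesis at $J''$ (strictly smaller $P$-height) with signal history $(\ah_J,c)$. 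Hence the whole sum is $\le 0$; the base case ($P$-height $0$, so $P(J)$ never moves again) is immediate, since only the one-shot bracket survives.

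The hard part will not be any single clever idea but the clean verification of the continuation identity $u(J,d,s)-u(J,s_{J\rightarrow b})=u(J',d,s)-u(J',s)$ and of the two indicator identities above: these are exactly where the definitions of $R$, $N$, $Succ$ and the signal-history function $S$, together with perfect recall, must all be invoked correctly, and where a careless treatment would silently break the telescoping. Everything else is routine manipulation of sums.
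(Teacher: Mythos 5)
Your proposal is correct, but for the hard direction it takes a genuinely different route from the paper's. The paper argues by contradiction via a minimal-counterexample lemma: if $h$ is not a forgiving correlated equilibrium, it walks down the player's tree to extract a deepest pair $(I,\ah)$ at which deviating is profitable but deviating at any proper descendant is not, and then asserts (without spelling out the algebra) that such a pair must violate one of the one-shot inequalities. You instead prove the quantitative statement directly by backward induction on the player's subtree, using the same telescoping decomposition $u(J,d,s)-u(J,s)=\bigl(u(J,s_{J\rightarrow b})-u(J,s_J)\bigr)+\bigl(u(J,d,s)-u(J,s_{J\rightarrow b})\bigr)$ that the paper deploys elsewhere for its tree decomposition of regrets. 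Both arguments are incarnations of the one-shot deviation principle, but yours makes explicit exactly the step the paper leaves implicit --- why ``profitable to deviate here but at no descendant'' forces a profitable one-shot deviation --- through the continuation identity $u(J,d,s)-u(J,s_{J\rightarrow b})=u(J',d,s)-u(J',s)$ with $J'=N(s,J,b)$ and the two indicator identities, all of which do hold under perfect recall for the reasons you give (unique ancestry makes $S(s,J'')=(S(s,J),s(J''))$ for $J''\in Succ(J,b)$, and reaching $J''$ forces passing through $J$ via $b$, so $R(J,s)\,\mathds{1}(N(s,J,b)=J'')=R(J'',s)$). The cost of your route is more bookkeeping; the benefit is a fully checkable induction in place of the paper's informal ``believes it is profitable'' phrasing, and your easy direction --- exhibiting the one-shot deviation as an explicit deviation plan that follows the recommendation everywhere except at $(I,\ah)$ --- is likewise a concrete version of the paper's one-line remark that local deviations are special global deviations.
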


\begin{proof}
The backward direction falls out immediately, as all local deviations are also global deviations, so if no global deviations exist, no local ones will either.
For the forward direction, consider the following lemma.
\label{globlocal}
\begin{lemma}
If $h$ is not a forgiving correlated equilibrium (as per Definition~\ref{facedefinition}), there exists at least one pair $(I, \ah)$ such that both of the following are true:

\begin{enumerate}
    \item \textbf{Possibility}. Under signal $h$, it is possible that signal history $\ah$ is observed at infoset $I$ (i.e. $\exists s$ s.t. $h(s)\mathds{1}(S(I, s) = \ah) > 0$)
    \item \textbf{Local Profitability} Player $P(I)$ believes it is profitable to deviate from the signal history $\ah$ at the current infoset, but believes that it will not be profitable to deviate from the recommendation for any descendent infosets $I' \in DES(I)$, $I' \neq I$.
\end{enumerate}
\end{lemma}
\begin{proof}
Consider the set of $(I, \ah)$ pairs that are possible under signal $h$. Define a sequence $L_n$ as follows. Choose $L_1$ to be any possible pair $(I_1, \ah_1)$ where $\ah_1 \in S(I_1)$ and where our player believes it is profitable (in expectation) to deviate from the signal recommendation $\ah_1$ at infoset $I$ and follow some arbitrary deviation plan $d$ from this point onwards. By the assumption that $h$ does not form a forgiving correlated equilibrium, at least one such instance must exist.
For any $n > 1$, choose $L_n$ to be any $(I_n, \ah_n)$ pair that satisfies both the above conditions and also ensures that $I_n \in DES(I_{n-1})$ and $I_n \neq I_{n-1}$.
Because our game is finite and perfect recall guarantees that there are no cycles in our game tree (from the perspective of a single player), this sequence must terminate. The last such $(I_n, \ah_n)$ pair satisfies our criteria.
\end{proof}

We can immediately deduce the following corollary, which is the contrapositive of the lemma.

\begin{corollary}
If there is no such $(I, \ah)$ pair that satisfies both the possibility and local profitability conditions defined above, then there is no profitable global deviation either.
\end{corollary}

This completes our proof, as the definition proposed by Proposition~\ref{faceredefinition} directly encodes the possibility and local profitability conditions for every $I, \ah$ pair.
\end{proof}
\section{Main Procedure}
\label{sec:mainprocedure}
In our setting, we imagine that players play a game repeatedly and choose their actions at each iteration of the game based on a simple learning procedure over their past results. Our goal is to design a procedure that will cause the resulting play to efficiently converge to the desired equilibrium. This is typically done in two steps. First, we define a collection of regrets, chosen such that if all such regrets go to $0$, then the resulting play will converge to the desired equilibrium. Secondly, we show that we can send all such regrets to $0$ by having players play in a particular fashion.
We define the $CFIR$ regrets as follows and show that minimizing these regrets is sufficient to converge to the set of forgiving correlated equilibria.
\begin{definition}{Counterfactual Internal Regrets (CFIR)}
\[
CFIR^{T+}(I, \ah, b) = \sum_{t = 1}^T R(I, s^t) \mathds{1}(S(I, s^t) = \ah) (u(I, s^t_{I\rightarrow b}) - u(I, s^t_{I}))
\]
\[
CFIR^{T+}(I, \ah) = \max_{b \in A(I)} CFIR^{T+}(I, \ah, b)
\]
\end{definition}

\begin{prop}
\label{cfirconvergence}
If $CFIR^{T+}(I, \ah) \rightarrow 0$ for all $I$ and $\ah \in S(I)$ as $T \rightarrow \infty$, then the distance between the signal $h^T(s) = \frac{1}{T}\sum_{t=1}^T \mathds{1}(s^t=s)$ and the set of forgiving correlated equilibria converges to 0.
\end{prop}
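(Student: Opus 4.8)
The plan is to observe that, once normalized by the horizon $T$, each $CFIR$ quantity is literally one of the one‑sided incentive gaps from Proposition~\ref{faceredefinition} evaluated at the empirical play distribution $h^T$; the hypothesis then says $h^T$ asymptotically satisfies the ``local'' reformulation of the definition of a forgiving correlated equilibrium, and a compactness argument upgrades this to convergence to the equilibrium set itself.

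First I would do the bookkeeping. Every factor inside $CFIR^{T+}(I,\ah,b)$ — the indicator $\mathds 1(S(I,s^t)=\ah)$, the reachability bit $R(I,s^t)$, and the payoffs $u(I,s^t_{I\to b})$ and $u(I,s^t_I)$ — depends on the timestep $t$ only through the pure profile $s^t$. Grouping the $T$ plays by the profile used and recalling $h^T(s)=\tfrac1T\sum_{t=1}^T\mathds 1(s^t=s)$ therefore yields the identity
\[
\tfrac1T\,CFIR^{T+}(I,\ah,b)\;=\;\sum_{s\in\Sigma} h^T(s)\,R(I,s)\,\mathds 1(S(I,s)=\ah)\,\big(u(I,s_{I\to b})-u(I,s_I)\big).
\]
Taking $b$ to be the action that $\ah$ prescribes at $I$ (the last coordinate of $\ah$) makes every surviving summand vanish, since then $s_{I\to b}=s_I$ for all $s$ with $S(I,s)=\ah$; hence that expression is $\ge 0$ for this particular $b$, so $\tfrac1T CFIR^{T+}(I,\ah)=\max_{b\in A(I)}\tfrac1T CFIR^{T+}(I,\ah,b)\ge 0$, and the hypothesis is exactly the statement that, for each of the finitely many pairs $(I,\ah)$ with $\ah\in S(I)$,
\[
\max_{b\in A(I)}\ \sum_{s\in\Sigma} h^T(s)\,R(I,s)\,\mathds 1(S(I,s)=\ah)\,\big(u(I,s_{I\to b})-u(I,s_I)\big)\ \longrightarrow\ 0 .
\]
Writing $g(h)$ for the maximum of this left‑hand side also over $I\in\mathcal I$ and $\ah\in S(I)$, finiteness of $\mathcal I$ and of each $S(I)$ lets me pass the limit through the outer maxima, so the hypothesis is precisely $g(h^T)\to 0$; and $g$ is exactly the functional whose nonpositivity characterizes forgiving correlated equilibria by Proposition~\ref{faceredefinition}.

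Second I would close with compactness. The map $g$ is a maximum of finitely many affine functions of the distribution $h$, hence continuous on the compact set of probability distributions over $\Sigma$, and the computation above shows $g(h)\ge 0$ for every $h$; by Proposition~\ref{faceredefinition} the set $F$ of forgiving correlated equilibria is exactly $\{h:\ g(h)\le 0\}=g^{-1}(0)$. If $\mathrm{dist}(h^T,F)$ did not tend to $0$, I would take a subsequence bounded away from $F$ by some $\varepsilon>0$, extract a convergent sub‑subsequence $h^{T_k}\to h^\star$ by compactness, and conclude from continuity that $g(h^\star)=\lim_k g(h^{T_k})=0$, i.e.\ $h^\star\in F$, contradicting that the subsequence stayed $\varepsilon$‑far from $F$; the same limit point shows $F\neq\emptyset$, so $\mathrm{dist}(\cdot,F)$ is well defined. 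The identity step is routine rewriting, and the only real content — and the point I expect to need care — is this last step: proximity to $F$ does not follow from any fixed pointwise bound of $\mathrm{dist}(h,F)$ in $g(h)$, but genuinely uses compactness of the simplex together with continuity of $g$ and nonemptiness of $F$. (A convergence \emph{rate} would additionally require a quantitative error bound of Hoffman/\L ojasiewicz type relating $g(h)$ to $\mathrm{dist}(h,F)$, which the stated asymptotic claim does not need.)
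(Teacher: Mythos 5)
Your proof is correct and follows essentially the same route as the paper: the heart of both arguments is rewriting the (time-averaged) $CFIR$ regret as the empirical-distribution form of the local incentive functional from Proposition~\ref{faceredefinition}. The one place you go beyond the paper is the closing compactness/continuity argument connecting asymptotic satisfaction of the defining inequalities to convergence of the distance to the equilibrium set --- the paper simply asserts that step (``at least one inequality must be violated infinitely often by some $\epsilon>0$'') --- and you also correctly supply the $\tfrac{1}{T}$ normalization that the paper's definition of $CFIR$ omits but its proof implicitly uses.
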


\begin{proof}
If this is not the case, then at least one of the inequalities which define the forgiving correlated equilibrium as per Proposition~\ref{faceredefinition} must be violated infinitely often by some $\epsilon > 0$.
Consider the specific $I$ and $\ah$ that define this inequality.
We can write,
\begin{align*}
& CFIR^{T+}(I, \ah) \\
= & \max_{b \in A(I)} \sum_{t = 1}^T R(I, s^t) \mathds{1}(S(I, s^t) = \ah) (u(I, s^t_{I\rightarrow b}) - u(I, s^t_{I})) \\
= & \max_{b \in A(I)} \sum_{s\in \Sigma} h^T(s)  R(I, s^t) \mathds{1}(S(I, s) = \ah) (u(I, s_{I\rightarrow b}) - u(I, s_{I}))
\end{align*}

which is the exact inequality in question. Since we assume that $CFIR^{T+}(I, \ah) \rightarrow 0$, this inequality cannot be violated by at least $\epsilon$ for an infinite number of timesteps.
\end{proof}

\subsection{Procedure}

For simplicity, we use \pyear{hart2000simple} though any other internal regret minimization procedure will also work.
At each timestep, each player traverses through her infosets in any order such that an infoset is encountered only after encountering all of its ancestors. Because perfect recall guarantees that the infosets (for a single player) form a proper tree with no cycles, such an ordering is always possible, and the common breadth-first and depth-first order traversals both satisfy this requirement. If we have our player choose her strategy $s^t(I)$ as we encounter each infoset $I$, this traversal ordering allows us to compute the partial signal history $S(s^t, I)_{-1}$ when we encounter each infoset $I$, since it only requires $s^t$ to be specified for ancestors and not descendants of $I$ (or $I$ itself).

How do have our player choose her actions $s^t(I)$ when she encounters an infoset? Our player recalls from memory the past timesteps where she observed the signal $S(s^t, I)_{-1}$ when encountering infoset $I$ and pretends that she is playing to minimize her internal regrets for only this subset of timesteps. Concretely, we can denote her action at the latest timestep where she observed $S(s^t, I)_{-1}$ at infoset $I$ as $a$. If no such timestep exists, choose $a$ arbitrarily.
We concatenate the partial signal history with this past action to get the full signal history $\ah = (S(s^t, I)_{-1}, a)$.
As per \pyear{hart2000simple}, she either repeats her action $a$ or departs from it to action $b$ with probability proportional to her positive regrets from not deviating to any action $b$ for only this subset of timesteps. 

\subsection{Proof of Convergence}

We say that a sequence of random variables $(L_n)_n$ converges to $L$ with high probability if, for every $\epsilon, \alpha > 0$, we can find $N$ such that $n > N$ implies $\Pr(\norm{L_n - L} > \epsilon) < \alpha$.
To guarantee that our procedure converges to the set of forgiving correlated equilibria, Theorem~\ref{cfirconvergence} shows that we merely need to show that all CFIR regrets go to zero with high probability.

\begin{prop}
\label{highprobabilityconvergence}
If all players play according to the above procedure, $CFIR^{T+}(I, \ah)$ converges to $0$ with high probability for all $I$ and $\ah \in S(I)$.
\end{prop}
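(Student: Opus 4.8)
The plan is to prove, for each fixed pair $(I,\ah)$, that $CFIR^{T+}(I,\ah)\to 0$ by recognizing $P(I)$'s behaviour at $I$ as a family of independent copies of the \pyear{hart2000simple} internal-regret-minimizing scheme, each run on a data-dependent sub-sequence of rounds, and then invoking their convergence guarantee. Because $\mathcal I$ and every $S(I)$ are finite, the ``with high probability'' conclusion is closed under a finite union bound, so it suffices to fix one $(I,\ah)$; write $\ah=(\vec p,a)$ with $\vec p$ a partial signal history and $a\in A(I)$.

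Fix the partial history $\vec p$ and let $\mathcal T_{\vec p}=\{t:S(s^t,I)_{-1}=\vec p\}$, enumerated $t_1<t_2<\cdots$; this set records the rounds on which $P(I)$ sees $\vec p$ on arriving at $I$, and it is determined by her own actions at the ancestors of $I$, all of which she has already fixed by the time the traversal reaches $I$. On precisely these rounds the procedure has her perform the \pyear{hart2000simple} move at $I$, with base action her previous choice at $I$ under $\vec p$ and with switching probabilities proportional to her positive regrets accumulated over $\mathcal T_{\vec p}$; re-indexing by $k$, this is exactly a run of \pyear{hart2000simple} internal regret minimization on action set $A(I)$ with realized action $a^{(k)}=s^{t_k}(I)$ and payoff vector $g^{(k)}=\bigl(u(I,s^{t_k}_{I\rightarrow b})\bigr)_{b\in A(I)}$, whose realized coordinate is $g^{(k)}(a^{(k)})=u(I,s^{t_k}_I)$. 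Since $R(I,s^t)=0$ forces every $u(I,\cdot)$ term to vanish, the reachability indicator in the definition of $CFIR$ is redundant, and unwinding definitions gives, for each $b$,
\[
CFIR^{T}(I,\ah,b)\;=\;\frac{1}{T}\sum_{k\le K_T,\; a^{(k)}=a}\bigl(g^{(k)}(b)-g^{(k)}(a)\bigr)\;=\;\frac{K_T}{T}\,\rho_{K_T}(a,b),
\]
where $K_T=|\mathcal T_{\vec p}\cap[1,T]|$ and $\rho_K(a,b)$ is the time-$K$ internal regret of the pair $(a,b)$ along the sub-sequence. Since $\tfrac{K_T}{T}\in[0,1]$, taking $\max_b$ of positive parts reduces everything to showing $\limsup_K\max_{a',b}\rho_K(a',b)^+=0$ almost surely (if $|\mathcal T_{\vec p}|<\infty$ the claim is immediate, as then $K_T$ is bounded and the $\tfrac1T$ factor kills the sum).

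This last statement is the conclusion of the \pyear{hart2000simple} analysis, but two points need checking, the second being the main obstacle. First, the ``clock'' $\mathcal T_{\vec p}$ is random rather than deterministic; this is harmless, since the \pyear{hart2000simple} guarantee is established by a Blackwell-approachability/martingale argument that uses only boundedness of payoffs and the fact that the prescribed mixed move is played at each round of the sub-sequence — so rounds outside $\mathcal T_{\vec p}$ may simply be deleted, and in fact only $P(I)$ need follow the procedure, all other behaviour merely feeding the payoff sequence and the clock. Second, and more subtly, the vector $g^{(k)}$ is revealed only at the end of round $t_k$ and is \emph{not} a consistent counterfactual: $u(I,s^{t_k}_{I\rightarrow b})$ freezes $P(I)$'s play at the descendants of $I$ at the realized strategy $s^{t_k}$, whose values there depend — through their partial signal histories — on the realized action $a^{(k)}$ at $I$ itself. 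So the payoff vector feeding the scheme at $I$ is, in principle, ``reactive'' to that scheme's own current move, which the textbook internal-regret argument does not permit.

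I expect resolving this reactivity to be the crux of the proof. The plan is to induct on the height of $I$ in $P(I)$'s infoset tree. When $I$ is a leaf for $P(I)$ (i.e.\ $DES(I)=\{I\}$), the quantity $u(I,s^{t}_{I\rightarrow b})$ involves no further move of $P(I)$ and hence does not depend on $a^{(k)}$, so the standard argument applies verbatim and $CFIR^{T+}(I,\cdot)\to 0$. For an internal $I$, compare $g^{(k)}$ with the consistent payoff $\tilde g^{(k)}(b)$, in which after the hypothetical move $b$ at $I$ player $P(I)$ continues with the choices her own procedure would make under a signal history containing $b$ rather than $a^{(k)}$: on the rounds summed in $CFIR^{T}(I,(\vec p,a),b)$ we have $a^{(k)}=a$ throughout, so $g^{(k)}(a)=\tilde g^{(k)}(a)$, while $g^{(k)}(b)-\tilde g^{(k)}(b)$ is a discrepancy between two signal-history keyings at descendant infosets of $I$ whose cumulative effect I expect to bound by $o(T)$ using the $o(1)$ descendant regrets supplied by the inductive hypothesis. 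Because $\tilde g^{(k)}$, unlike $g^{(k)}$, is conditionally unaffected by $a^{(k)}$ given the history preceding $P(I)$'s round-$t_k$ decision at $I$, the \pyear{hart2000simple} guarantee applies to the $\tilde g$-internal-regret, and combining the two estimates yields $\limsup_K\max_{a',b}\rho_K(a',b)^+=0$ a.s. Almost-sure convergence implies the stated convergence with high probability, and a final union bound over the finitely many $(I,\ah)$ completes the proof.
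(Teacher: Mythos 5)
The completed portion of your proposal --- restricting to the random subsequence $\mathcal{T}_{\vec p}$, recognizing the play at $I$ on that subsequence as a run of the \pyear{hart2000simple} scheme, observing that the reachability indicator is redundant because unreachable $u$-terms are defined to be zero, absorbing the dilution factor $K_T/T\le 1$, and closing with a union bound over the finitely many $(I,\ah)$ --- is essentially the paper's entire proof, which consists of exactly this subsequence-plus-dilution argument. Where you diverge is in flagging the ``reactivity'' of the payoff vector, and there you have put your finger on a genuine subtlety that the paper's proof passes over: it asserts that the restricted sequence ``is exactly an internal regret'' and invokes the internal-regret guarantee, but that guarantee requires the round-$t$ payoff vector to be conditionally independent of the player's realized round-$t$ action, whereas $u(I,s^t_{I\rightarrow b})$ depends on $P(I)$'s own choices at descendants of $I$, which the procedure keys on signal histories containing the realized $s^t(I)$. (As you note, the issue is vacuous when $DES(I)=\{I\}$, and other players' strategies are chosen without seeing $s^t(I)$, so the reactivity comes only through $P(I)$'s own subtree.)

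Your proposed repair, however, does not close this step. The inductive argument hinges on bounding $\sum_k\bigl(g^{(k)}(b)-\tilde g^{(k)}(b)\bigr)$ --- the gap between the realized ($a$-keyed) and counterfactual ($b$-keyed) continuations in the subtree after $(I,b)$ --- by $o(T)$ using the descendants' regrets, but vanishing internal regret at a descendant only rules out profitable one-shot deviations from each continuation; it does not force two different no-regret continuations to have asymptotically equal values, so the discrepancy bound does not follow from the inductive hypothesis as stated. A further difficulty is that the $b$-keyed tables at those descendants are updated only on the complementary subsequence $\{t:\,s^t(I)=b\}$, so $\tilde g^{(k)}$ is built from regret minimizers that are essentially frozen on the very rounds you are summing over, which undermines the comparison. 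In short: your plan correctly reproduces everything the paper proves, correctly identifies a step the paper does not justify, but leaves that step open, so as written it is not yet a proof of the proposition.
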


For any fixed infoset $I$ and signal history $\ah$, examine the timesteps $t$ where $I$ is reachable and $\ah$ is observable. Mathematically, we can represent this as the set of timesteps $t$ that satisfy
\[
R(I, s^{t}) \mathds{1}(S(I, s^{t}) = \ah) = 1
\]
Say that we visit $(I, \ah)$ at timestep $t$ if $t$ satisfies this condition. For a given $(I, \ah)$, notice that any timestep $t$ where we do not visit $(I, \ah)$ automatically contributes $0$ regret to $CFIR^{T+}(I, \ah)$ due to the definition of $CFIR$. Consider for the sake of argument that we visit $(I, \ah)$ every timestep. If so, then this sequence of $CFIR$ is exactly an internal regret and because our choice of actions is constrained by an internal regret minimizing procedure, $CFIR^{T+}(I, \ah, b)$ converges to 0 with high probability as desired.
Of course, in the general case, we may not visit $(I, \ah)$ at every timestep $t$.
Let $B$ represent the maximum possible regret for any strategy profile choice in our game.
To prove convergence to 0 with high probability in general, choose any $\epsilon < B$ and $\alpha > 0$. Choose $N'$ such that if every timestep visited $(I, \ah)$, $n > N'$ would guarantee that

\[
\Pr(CFIR^{T+}(I, \ah, b) \geq \epsilon) \leq \alpha
\]

Choosing $N = \frac{BN'}{\epsilon}$ satisfies our condition without this assumption. We show this in two cases. For any $n > N$, if less than $N'$ timesteps between $1 \dots n$ visit $(I, \ah)$, then our maximum possible regret is bounded by $\frac{BN'}{N} = \epsilon$ so we satisfy the condition. On the other hand, if $N'$ or more of these timesteps observe $(I, \ah)$, then the condition is satisfied directly by the fact that $N > N'$, as all the remaining timesteps that do not visit $(I, \ah)$ contribute no regret.
\subsection{Chance Nodes}
\label{chance}

Up to this point, we have assumed that our games lack any element of randomness. We can modify our game to include chance nodes controlled by a new ``chance'' player whose strategy is fixed. 
As our procedure includes no interaction between players, the resulting correlating signal will not induce any correlation between the chance player and the other player's recommendations. Additionally, since all regret minimization procedures guarantee that the regrets will go to zero regardless of how other players play, we can immediately guarantee convergence for games with chance moves without any changes to our procedure.

\subsection{Memory Usage}

Our procedure requires that each player keep track of all their regrets and latest moves for every $(I, \ah)$ pair. In the worst case, this results in exponentially large memory usage, as the number of possible signal histories can be exponential in the number of infosets in a game if the tree is very unbalanced. Note though that, if we let $M$ denote the number of infosets in the game, running our procedure for $N$ steps will only require at most $O(NM)$ memory, as only one of the exponentially many regrets will be relevant at any given infoset for a given timestep.
As a result, running our algorithm for a number of steps polynomial in $M$ will also only require memory polynomial in $M$.

\section{Low Memory Variation}
\label{sec:lowmemory}

If, however, memory is cause for concern, we can revise our procedure to require only at most polynomial amounts of memory. This modified procedure will instead converge to the set of extensive-form correlated equilibria which poses weaker incentive constraints than the set of forgiving correlated equilibria. Recall that in an extensive-form correlated equilibrium, the mediator will cease to give recommendations to any player who chooses to ignore her recommendation.
As a result, convergence to the set of extensive-form correlated equilibria does not require tracking a separate regret for each signal history, which was the underlying cause of the exponential memory requirement for the main procedure.

\subsection{Regret Definitions}

We start by defining a pair of regrets ($IR$ and $AR$) which, if minimized, guarantee convergence to extensive-form correlated equilibria and agent form correlated equilibria respectively.

\theoremstyle{definition}
\begin{definition}{Internal Regrets ($IR$)}

For $a \in A(I)$, define
\[
IR^{T+}(I, a) = \max\limits_{s' \in \Sigma} \frac{1}{T} \sum_{t=1}^T \mathds{1}(s'(I) \neq a) O(s^t, I, a) (u(I, s', s^t) - u(I, s^t))
\]
\end{definition}
 Intuitively, $IR$ says that upon receiving a recommendation $a$ at infoset $I$, what is your regret for not instead playing an alternative strategy $s'$ where $s'$ is required to \emph{not} play action $a$ at infoset $I$? $IR$ is always non-negative since in the worst case, a player can revert to her recommended action $a$ and receive 0 regret.
We relate $IR$ to the extensive-form correlated equilibrium with the following proposition.
\begin{prop}
If 
\[
\max\limits_{I \in \mathcal{I}} \max\limits_{a \in A(I)} IR^{T+}(I, a) \rightarrow 0
\]
as $T \rightarrow \infty$ then the distance between the signal $h^T(s) = \frac{1}{T}\sum_{t=1}^T \mathds{1}(s^t=s)$ and the set of extensive-form correlated equilibria converges to 0.
\end{prop}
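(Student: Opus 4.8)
The plan is to follow the template of the proof of Proposition~\ref{cfirconvergence}: assume the conclusion fails, pick a defining inequality of the extensive-form correlated equilibrium (as in the definition in Section~\ref{efce}) that is violated by a fixed $\epsilon>0$ infinitely often, and derive a contradiction by bounding the violated quantity by a fixed finite sum of $IR^{T+}$ terms, each of which tends to $0$ by hypothesis. The first step is the routine translation between time averages and the empirical signal $h^T$: for every infoset $I$, action $a\in A(I)$, and strategy $s'\in\Sigma$,
\[
\frac{1}{T}\sum_{t=1}^T O(s^t,I,a)(u(I,s',s^t)-u(I,s^t))=\sum_{s\in\Sigma}h^T(s)\,O(s,I,a)(u(I,s',s)-u(I,s)).
\]
Consequently $IR^{T+}(I,a)$ is exactly the empirical analogue of the extensive-form correlated equilibrium inequality at $(I,a)$, except that the deviation $s'$ is required to actually change the action at $I$. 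So the only gap between ``all $IR^{T+}(I,a)\to 0$'' and ``the empirical signal is asymptotically an extensive-form correlated equilibrium'' comes from deviations $s'$ with $s'(I)=a$.

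To close that gap I would prove a \emph{decomposition lemma}, in the spirit of Section~\ref{sec:globaldecomposition} (in particular Proposition~\ref{cfrdecomp}): for every $T$, every $(I,a)$, and every $s'\in\Sigma$,
\[
\sum_{s\in\Sigma}h^T(s)\,O(s,I,a)(u(I,s',s)-u(I,s))\ \le\ \sum_{I'\in DES(I)}\ \sum_{a'\in A(I')}IR^{T+}(I',a').
\]
The proof is by induction on the height of $I$ inside the tree of infosets owned by $P(I)$, splitting on $s'(I)$. If $s'(I)\ne a$, then $s'$ is a feasible competitor in the maximum defining $IR^{T+}(I,a)$, so the left-hand side is at most $IR^{T+}(I,a)$ and we are done. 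If $s'(I)=a$ (and $P(I)$ moves again after $I$, since otherwise the left-hand side is $0$), then whenever $O(s,I,a)=1$ the continuations of $s'$ and of $s$ from $I$ agree at $I$, so they reach the same next $P(I)$-infoset $I'\in Succ(I,a)$; using perfect recall this gives the pointwise identity $O(s,I,a)=\sum_{I'\in Succ(I,a)}O(s,I',s(I'))$ together with $u(I,s',s)-u(I,s)=u(I',s',s)-u(I',s)$ for the $s$-determined $I'$. Summing over $s$, swapping the order of summation, and using that $O(s,I',a')$ already forces $s(I')=a'$, the left-hand side becomes $\sum_{I'\in Succ(I,a)}\sum_{a'\in A(I')}\sum_{s}h^T(s)O(s,I',a')(u(I',s',s)-u(I',s))$, which is a sum of terms of the original shape at strictly deeper pairs $(I',a')$; apply the inductive hypothesis to each. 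Perfect recall guarantees each descendant pair is reached through a unique chain, so the resulting bound is a genuine finite sum over $DES(I)$.

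With the lemma in hand the argument closes as in Proposition~\ref{cfirconvergence}. If the conclusion were false there would be a fixed pair $(I,a)$ and $\epsilon>0$ with $\max_{s'\in\Sigma}\sum_{s}h^T(s)O(s,I,a)(u(I,s',s)-u(I,s))\ge\epsilon$ for infinitely many $T$; applying the lemma to the maximizing $s'$ (the right-hand bound does not depend on $s'$) forces $\sum_{I'\in DES(I)}\sum_{a'\in A(I')}IR^{T+}(I',a')\ge\epsilon$ infinitely often. But this is a fixed finite sum of nonnegative quantities each tending to $0$ by hypothesis, a contradiction. As with Proposition~\ref{faceredefinition}, the same computation can instead be packaged as a standalone equivalence between the extensive-form correlated equilibrium inequalities and the (asymptotic) $IR$ inequalities, after which the statement is immediate.

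I expect the decomposition lemma, and specifically the bookkeeping that rewrites the $O(s,I,a)$ filter as a sum over $Succ(I,a)$ while keeping the subgame value telescoping intact, to be the main obstacle: it is essentially the ``tree decomposition'' maneuver already used for $ER$ and $CFR$ in Section~\ref{sec:globaldecomposition}, but here it has to interact correctly both with the ``must change the action at $I$'' restriction built into the definition of $IR$ and with the edge case in which the deviating player never moves again after $I$.
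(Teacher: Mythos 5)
Your argument is correct, and at its core it is the same reduction the paper makes: both proofs must dispose of deviations $s'$ with $s'(I)=a$, which the $\mathds{1}(s'(I)\neq a)$ filter in $IR$ excludes. The difference is in how that case is handled. The paper's proof is an existence/selection argument: it asserts in one sentence, citing ``yet another variation of the one-shot deviation principle,'' that a violating triplet $(I,a,s')$ can always be chosen with $s'(I)\neq a$, and gives no further justification. Your decomposition lemma is precisely the missing content behind that sentence, made uniform and quantitative: the extensive-form correlated equilibrium violation at $(I,a)$ is bounded by a fixed finite, $s'$-independent sum of $IR^{T+}$ terms over $DES(I)$, after which the ``finitely many nonnegative quantities each tending to zero'' conclusion is immediate. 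Your version buys rigor (the telescoping identity $u(I,s',s)-u(I,s)=u(I',s',s)-u(I',s)$ when $s'(I)=s(I)=a$, the rewriting of the $O(s,I,a)$ filter over $Succ(I,a)$, and the handling of the ``never moves again'' edge case are exactly the facts the paper leaves implicit), at the cost of length; the paper's version, if unpacked, would need the same finite-tree and perfect-recall facts to guarantee its selection terminates. One bookkeeping caveat: if you apply the inductive hypothesis as a black box to every pair $(I',a')$ with $I'\in Succ(I,a)$, descendants are counted with multiplicity, since for each $I'$ only the single action $a'=s'(I')$ actually needs to recurse while every other $a'$ terminates immediately at $IR^{T+}(I',a')$. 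Either unroll the recursion along the unique $s'$-determined path (which recovers your stated inequality exactly) or accept bounded integer multiplicities; both are harmless, since a finite sum of vanishing nonnegative terms vanishes.
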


\begin{proof}

If this is not the case, then at least one of the inequalities which define the set of extensive-form correlated equilibria must be violated infinitely often by some $\epsilon > 0$.
This means that we can find some $I$, $a$, $s'$ such that for an infinite number of timesteps $t$

\[
\sum_{s \in \Sigma} h^t(s) O(s, I, a) (u(I, s', s) - u(I, s)) > \epsilon
\]

By yet another variation of the one-shot deviation principle, we can choose some triplet $(I, a, s')$ that satisfies both the above condition and that $s'(I) \neq a$, as the nonexistence of such a triplet would imply that the inequality would never be violated. We can now write
\begin{align*}
& IR^{T+}(I, a) \\
= & \max\limits_{s' \in \Sigma} \frac{1}{T} \sum_{t=1}^T \mathds{1}(s'(I) \neq a) O(s^t, I, a) (u(I, s', s^t) - u(I, s^t)) \\
= & \max\limits_{s' \in \Sigma} \sum_{s \in \Sigma} h^T(s) \mathds{1}(s'(I) \neq a) O(s, I, a) (u(I, s', s) - u(I, s)) \\
= & \max\limits_{s' \in \Sigma} \sum_{s \in \Sigma} h^T(s) O(s, I, a) (u(I, s', s) - u(I, s)) \rightarrow 0 \\
\end{align*}

This contradicts our condition for selecting $(I, a, s')$, thus proving the proposition.
\end{proof}

\theoremstyle{definition}
\begin{definition}{Agent Form Correlated Equilibrium Regrets ($AR$)}

For $a, b \in A(I)$ define
\[
AR^{T+}(I, a, b) = \frac{1}{T} \sum_{t=1}^T  O(s^t, I, a) (u(I, s^t|_{I\rightarrow b}, s^t) - u(I, s^t))
\]
\[
AR^{T+}(I, a) = \max_{b \in A(I)} AR^{T+}(I, a, b)
\]
\end{definition}

$AR$ is very similar to $IR$, except that only deviations of a single action are allowed. Note that AR can never be negative since in the worst case, player $P(I)$ can just set her action equal to the recommendation of $a$. We relate $AR$ to the agent form correlated equilibrium with the following proposition.
\begin{prop}
If 

\[
\max\limits_{I \in \mathcal{I}} \max\limits_{a \in A(I)} AR^{T+}(I, a) \rightarrow 0
\]
as $T \rightarrow \infty$ then the distance between the signal $h^T(s) = \frac{1}{T}\sum_{t=1}^T \mathds{1}(s^t=s)$ and the set of agent correlated equilibria converges to 0.
\end{prop}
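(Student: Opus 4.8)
The plan is to reuse, essentially verbatim, the contradiction argument used for the $IR$ and $CFIR$ propositions. I would first suppose that the distance from $h^T$ to the set of agent form correlated equilibria does not converge to $0$. That set is the intersection of the finitely many halfspaces indexed by triples $(I,a,b)$ with $I\in\mathcal{I}$ and $a,b\in A(I)$, one per inequality obtained by stripping the nested maxima from the definition; staying bounded away from it therefore forces at least one of those inequalities to be violated by some fixed $\epsilon>0$ for infinitely many timesteps $t$. Fix the offending triple $(I,a,b)$, so that
\[
\sum_{s\in\Sigma} h^t(s)\, O(s,I,a)\,(u(I, s|_{I\rightarrow b}, s) - u(I, s)) > \epsilon
\]
holds for infinitely many $t$.

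Next I would identify this quantity with a lower bound on $AR^{T+}(I,a)$. Grouping the timesteps $1,\dots,T$ according to the value of $s^t$ and using $h^T(s)=\frac1T\sum_{t=1}^T\mathds{1}(s^t=s)$ gives
\[
AR^{T+}(I,a,b) = \frac1T\sum_{t=1}^T O(s^t,I,a)\,(u(I, s^t|_{I\rightarrow b}, s^t) - u(I, s^t)) = \sum_{s\in\Sigma} h^T(s)\, O(s,I,a)\,(u(I, s|_{I\rightarrow b}, s) - u(I, s)),
\]
and since $AR^{T+}(I,a)=\max_{b'\in A(I)} AR^{T+}(I,a,b')\ge AR^{T+}(I,a,b)$, the previous step shows $AR^{T+}(I,a)>\epsilon$ for infinitely many $T$, hence $\max_{I}\max_{a} AR^{T+}(I,a)\not\to 0$, contradicting the hypothesis.

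This is the most routine of the three convergence propositions, and I do not expect a genuine obstacle: unlike $IR$ — which carries the side constraint $\mathds{1}(s'(I)\neq a)$ and therefore needed a one-shot-deviation argument to be matched against the extensive-form correlated equilibrium inequality — both $AR$ and the agent form correlated equilibrium definition already range only over single-action deviations $b\in A(I)$, with no side constraint, so the two expressions coincide term for term and no deviation-principle step is required. The one point worth a sentence of care is the passage from ``$h^T$ stays bounded away from the equilibrium set'' to ``a fixed defining inequality is violated by a fixed $\epsilon$ infinitely often''; this is a compactness argument over the finite-dimensional simplex of strategy-profile distributions together with continuity of the finitely many defining functionals, and it is exactly the elision already made in the $IR$ and $CFIR$ proofs. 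I would also remark in passing that allowing $b=a$ in the maximum is harmless, since whenever $O(s,I,a)=1$ the term $u(I, s|_{I\rightarrow a}, s) - u(I, s)$ vanishes.
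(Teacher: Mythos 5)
Your proposal is correct and follows essentially the same route as the paper: assume the distance does not vanish, extract a defining inequality of the agent form correlated equilibrium violated by a fixed $\epsilon>0$ infinitely often, and rewrite $AR^{T+}(I,a,b)$ as the corresponding sum over $s\in\Sigma$ weighted by $h^T(s)$ to reach a contradiction. Your added remarks (the compactness step, the harmlessness of $b=a$, and why no one-shot-deviation argument is needed here) are accurate but not departures from the paper's argument.
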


\begin{proof}

If this is not the case, then at least one of the inequalities which define the set of agent form correlated equilibria must be violated infinitely often by some $\epsilon > 0$.
Consider the specific $I$ and $a$ that define this inequality.
We can write
\begin{align*}
& AR^{T+}(I, a) \\
= & \max_{b \in A(I)} \frac{1}{T} \sum_{t=1}^T  O(s^t, I, a) (u(I, s^t|_{I\rightarrow b}, s^t) - u(I, s^t)) \\
= & \max_{b \in A(I)} \frac{1}{T} \sum_{s \in \Sigma}  h^T(s) O(s, I, a) (u(I, s|_{I\rightarrow b}, s) - u(I, s)) \\
\end{align*}

which is the exact inequality in question. Since we assume that $AR^{T+}(I, a) \rightarrow 0$, this is not possible.
\end{proof}
\subsection{Regret Decomposition}
\label{mainprop}

We can leverage the tree decomposition technique again to decompose our global internal regret into a sum of local $AR$ and $CFR$ regrets. This approach is very similar to the approach for counterfactual regret minimization described in Section~\ref{cfr}.

\begin{prop}
\label{irinequality}
\[
IR^{T+}(I^P, a) \leq AR^{T+}(I^P, a) + \max\limits_{b \in A(I^P)} \mathds{1}(b \neq a) \sum_{I \in Succ(I^P, b)} ER^{T}(I^P, a, I)
\]
\end{prop}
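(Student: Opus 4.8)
The plan is to mirror the proof of the CFR Inequality in Section~\ref{sec:globaldecomposition}, now applying the same tree decomposition technique at the infoset $I^P$ itself rather than at a generic descendant, while carrying along the side condition $s'(I^P) \neq a$ that distinguishes internal regret from external regret. First I would start from the definition of $IR^{T+}(I^P, a)$ and, inside the maximand, add and subtract $u(I^P, s^t|_{I^P \to s'(I^P)})$, exactly as in the tree decomposition identity. Subadditivity of the maximum then bounds $IR^{T+}(I^P, a)$ by the sum of a ``first term'' comparing $u(I^P, s^t|_{I^P \to s'(I^P)})$ with $u(I^P, s^t)$, and a ``second term'' comparing $u(I^P, s', s^t)$ with $u(I^P, s^t|_{I^P \to s'(I^P)})$; both still carry the factors $\mathds{1}(s'(I^P) \neq a)\, O(s^t, I^P, a)$.

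For the first term, I would observe that it depends on $s'$ only through $b := s'(I^P)$: the choice $b = a$ kills the indicator and contributes $0$, while any $b \neq a$ yields precisely $AR^{T+}(I^P, a, b)$, once one notes that $u(I^P, s^t|_{I^P \to b}) = u(I^P, s^t|_{I^P \to b}, s^t)$ because the two profiles differ only in the moves of player $P(I^P)$. Since $AR^{T+}(I^P, a) = \max(0, \max_{b \neq a} AR^{T+}(I^P, a, b))$, the first term is at most $AR^{T+}(I^P, a)$.

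For the second term I would split the maximum as $\max_{b \neq a} \max_{s' : s'(I^P) = b}$ and, on each timestep $t$ with $O(s^t, I^P, a) = 1$, rewrite the one-step-ahead difference at $I^P$ as a subgame difference at the successor $N(s^t, I^P, b)$. The key observation, which uses perfect recall together with the fact that $O(s^t, I^P, a) = 1$ forces $I^P$ onto the path of $s^t$, is that no infoset controlled by $P(I^P)$ lies strictly between the action $b$ at $I^P$ and the next infoset she controls, so that $u(I^P, s', s^t) = u(N(s^t, I^P, b), s', s^t)$ and $u(I^P, s^t|_{I^P \to b}) = u(N(s^t, I^P, b), s^t)$. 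This lets me rewrite the second term's per-timestep contribution as $\sum_{I' \in Succ(I^P, b)} O(s^t, I^P, a)\, R(s^t, I')\, (u(I', s', s^t) - u(I', s^t))$, since when $I^P$ lies on the path of $s^t$ exactly one $I' \in Succ(I^P, b)$ has $R(s^t, I') = 1$, the others being unreachable (again by perfect recall, as their ancestries are incompatible with $s^t$). From here the manipulation is verbatim that of the CFR Inequality: pull $\sum_{I' \in Succ(I^P, b)}$ outside the maximum over $s'$, which can only increase the value, and recognize each resulting quantity as $ER^T(I^P, a, I')$. Summing the two bounds and using $\max_{b \neq a} \sum_{I' \in Succ(I^P, b)} ER^T(I^P, a, I') \leq \max_{b \in A(I^P)} \mathds{1}(b \neq a) \sum_{I' \in Succ(I^P, b)} ER^T(I^P, a, I')$ (the right-hand side additionally permitting the value $0$) yields the claimed inequality.

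I expect the main obstacle to be precisely this rewriting in the second term: justifying that a one-shot deviation at $I^P$ collapses cleanly into a single subgame rooted at $N(s^t, I^P, b)$, and that introducing the reachability factor $R(s^t, I')$ and summing over all of $Succ(I^P, b)$ loses nothing. This requires careful bookkeeping with the definitions of $N$, $Succ$, $R$, and the abbreviated utility, plus perfect recall to discard the remaining successors. The degenerate case $N(s^t, I^P, b) = \emptyset$, where player $P(I^P)$ never moves again after $I^P$, should be handled separately, but there the difference is identically $0$ and the corresponding sum is empty, so it causes no trouble.
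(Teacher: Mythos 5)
Your proposal is correct and takes essentially the same approach as the paper's proof: the same tree decomposition (add and subtract $u(I^P, s^t|_{I^P\rightarrow s'(I^P)})$), identification of the first term with $AR^{T+}(I^P,a)$ using that the $b=a$ branch contributes zero, and rewriting of the second term as a sum of $ER^T(I^P,a,I)$ over $I \in Succ(I^P, b)$ after pulling the sum outside the maximum. Your extra care with the successor rewriting, the reachability factors, and the $N(s^t,I^P,b)=\emptyset$ case only makes explicit steps the paper leaves implicit.
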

\begin{proof}

\begin{align*}
IR^{T+}(I^P, a) =& \; \max\limits_{s^B \in \Sigma} \frac{1}{T} \sum_{t=1}^T \mathds{1}(s^B(I^P) \neq a) O(s^t, I^P, a) (u(I^P, s^B, s^t) - u(I^P, s^t)) \\
\leq & \; \max\limits_{s^B \in \Sigma}\:\frac{1}{T} \sum_{t=1}^T \mathds{1}(s^B(I^P) \neq a) O(s^t, I^P, a)  (u(I^P, s^t |_{I^P\rightarrow s^B(I^P)}) - u(I^P, s^t)) \: + \\
& \;  \max\limits_{s^B \in \Sigma} \frac{1}{T} \sum_{t=1}^T  \mathds{1}(s^B(I^P) \neq a) O(s^t, I^P, a)  (u(I^P, s^B, s^t) - u(I^P, s^t |_{I^P\rightarrow s^B(I^P)}))
\end{align*}

This is an application of the tree decomposition technique discussed previously. Now, examine the two maximization terms separately. The first term is

\begin{align*}
& \max\limits_{s^B \in \Sigma}\:\frac{1}{T} \sum_{t=1}^T \mathds{1}(s^B(I^P) \neq a) O(s^t, I^P, a)  (u(I^P, s^t |_{I^P\rightarrow s^B(I^P)}) - u(I^P, s^t)) \\
= & \max\limits_{b \in A(I^P)}\:\frac{1}{T} \sum_{t=1}^T \mathds{1}(b \neq a) O(s^t, I^P, a)  (u(I^P, s^t |_{I^P\rightarrow b}) - u(I^P, s^t)) \\
= & \max\limits_{b \in A(I^P)}\:\frac{1}{T} \sum_{t=1}^T O(s^t, I^P, a)  (u(I^P, s^t |_{I^P\rightarrow b}) - u(I^P, s^t)) \\
= & \: AR^{T+}(I^P, a)
\end{align*}

The equality on the third line is given by the fact that when $a=b$ the AR regret is 0. For the second term,

\begin{align*}
&  \max\limits_{s^B \in \Sigma} \frac{1}{T} \sum_{t=1}^T  \mathds{1}(s^B(I^P) \neq a) O(s^t, I^P, a)  (u(I^P, s^B, s^t) - u(I^P, s^t |_{I^P\rightarrow s^B(I^P)})) \\
= & \max\limits_{b \in A(I^P)} \max\limits_{s^B \in \Sigma} \frac{1}{T} \sum_{t=1}^T  \mathds{1}(b \neq a) O(s^t, I^P, a)  (u(I^P, s^B|_{I^P\rightarrow b}, s^t) - u(I^P, s^t |_{I^P\rightarrow b})) \\
= & \max\limits_{b \in A(I^P)} \mathds{1}(b \neq a) \max\limits_{s^B \in \Sigma}  \sum_{I \in Succ(I^P, b)} \frac{1}{T}  \sum_{t=1}^T  O(s^t, I^P, a) R(s^t, I) (u(I, s^B, s^t) - u(I, s^t)) \\
\leq & \max\limits_{b \in A(I^P)} \mathds{1}(b \neq a)   \sum_{I \in Succ(I^P, b)} \max\limits_{s^B \in \Sigma} \frac{1}{T}  \sum_{t=1}^T  O(s^t, I^P, a) R(s^t, I) (u(I, s^B, s^t) - u(I, s^t)) \\
= & \max\limits_{b \in A(I^P)} \mathds{1}(b \neq a) \sum_{I \in Succ(I^P, b)} ER^{T}(I^P, a, I)
\end{align*}
\end{proof}
This inequality can be interpreted as follows. Consider a player who is deciding what actions to take after receiving an action recommendation $a$ at infoset $I$.
Assuming she would like to deviate from the recommendation, we can split her decision into two choices: first, she must decide what action to play immediately at infoset $I$.
Secondly since she will receive no further information, she must decide on a strategy for all future infosets.
Notice that the first choice is an ``internal'' regret since the player is given an action recommendation at that infoset, while the second choice is an ``external'' regret since the player receives no additional information upon reaching any future infosets. The first choice's regret corresponds exactly to $AR$, while the second corresponds exactly to $ER$.

Combining Proposition~\ref{irinequality} and Proposition~\ref{cfrdecomp}, we can write \textbf{Equation~\ref{localirdecomp}}:
\label{localirdecomp}
\[
IR^{T+}(I^P, a) \leq AR^{T+}(I^P, a) + \max\limits_{b \in A(I^P)} \mathds{1}(b \neq a) \sum_{I \in Succ(I^P, b)} \sum_{I' \in DES(I)} CFR^{T+}(I^P, a, I')
\]

Equation~\ref{localirdecomp} upper bounds the overall internal regret with the combination of both local internal (AR) and external (CFR) regrets.
\subsection{Low Memory Procedure}
\textbf{Part 1.} At each timestep $t$, one iteration of the game is played as follows, starting from the root: when an infoset $I$ is observed, denote the action player $P(I)$ played the last time she observed infoset $I$ as $a$.
Our player recalls from memory all previous timesteps where she received action recommendation $a$ at infoset $I$ and chooses her next move to minimize her internal regret determined by examining only such timesteps. Doing this step and ignoring the remainder of the algorithm (e.g. by playing arbitrarily) converges to an agent form correlated equilibrium.

\textbf{Part 2.} For each remaining infoset $I$ off the path of play at timestep t, player $P(I)$ chooses her actions as follows: she finds the closest ancestor infoset $I^P$ that is both owned by her and on the path of play at timestep t. Let $a$ denote the action she played at the current timestep at infoset $I^P$. At infoset $I$, she recalls all previous timesteps where she played $a$ at $I^P$ and $I$ was off the path of play. She now plays to minimize her external regret determined by examining only this subset of timesteps. If no such $I^P$ exists, she plays arbitrarily.

\subsection{Intuition}

In a slightly more informal style, but perhaps a bit more intuitively, we describe the process again using the specific internal regret minimization procedure from \pyear{hart2000simple} and the external regret minimization procedure from \pyear{blackwell1956analog}.

We imagine our player is playing a game repeatedly and must now choose her strategy for the current iteration of the game.
When it is her turn to play at any infoset $I^P$, she recalls the last time $I^P$ was on the path of play and her action $a^p$ for that infoset from memory.
As per \pyear{hart2000simple}, she can either repeat $a^p$ or deviate to any other move $b$ with probability proportional to her (positive) regret for not playing $b$ in the past in this exact situation.
Regardless, denote her final choice as $a$. By the assumption of perfect recall, after committing to action $a$, there are now several infosets belonging to our player that can no longer ever be on the path of play for the current iteration of the game, regardless of future play.
Our player may wonder what would have happened if she had instead chosen differently and somehow eventually stumbled upon one of those infosets; call it a fear of missing out. How would she have wanted to play?

For any such infoset I, we have her play action b with probability proportional to her regret described by $CFR^{T+}(I^P, a, I, b)$, as per \pyear{blackwell1956analog}.
In other words, filtering only for iterations where she eliminated the chance to see infoset $I$ by playing action $a$ at $I^P$, if she had instead not chosen to play $a$ and somehow instead stumble upon infoset $I$, to play action $b$ at infoset $I$ proportional for her regret for not playing action $b$ in this exact past situation.

In terms of why this play converges to an extensive-form correlated equilibrium, playing by a regret minimization algorithm sends all the relevant CFR regrets (and thus by extension ER regrets) and AR regrets to zero. ER regrets going to zero shows that had she deviated from the recommendation, her strategy post deviation is optimal. AR regrets going to zero show that conditioned on her strategies post deviation, following the recommendations is optimal. The combination of both of these regrets going to zero thus suggests that there is no profitable deviation of any kind from the recommendation, thus forming an extensive-form correlated equilibrium.

\subsection{Convergence Results}
\label{aceprop}
\begin{prop}
If all players play according to part 1 of the above procedure, then the distance between the signal $h(s) = \frac{1}{T}\sum_{t=1}^T \mathds{1}(s^t=s)$ and the set of agent correlated equilibria converges to 0 as $T \rightarrow \infty$ with high probability.
\end{prop}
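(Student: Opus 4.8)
The plan is to reduce the statement to the proposition proved above relating the $AR$ regrets to agent correlated equilibria. That proposition shows that if $\max_{I \in \mathcal{I}} \max_{a \in A(I)} AR^{T+}(I,a) \to 0$ then $h^T$ approaches the set of agent correlated equilibria; since there are only finitely many infoset--action pairs, it therefore suffices to prove that for each fixed $(I,a)$ we have $AR^{T+}(I,a) \to 0$ with high probability, and then take a union bound.

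First I would fix $(I,a)$ and observe, exactly as in the proof of Proposition~\ref{highprobabilityconvergence}, that any timestep $t$ with $O(s^t, I, a) = 0$ contributes nothing to $AR^{T+}(I,a)$. Restricting attention to the subsequence of timesteps at which $I$ is observed, part 1 of the procedure is precisely \pyear{hart2000simple} internal regret minimization run on that subsequence: the ``recommendation'' is the action played the last time $I$ was observed, and the player departs to an action $b$ with probability proportional to her accumulated regret over those steps of the subsequence at which she played $a$, which is exactly $AR^{T+}(I,a,b)$ multiplied by the fraction of the first $T$ timesteps on which $O(s^t,I,a)=1$, a fraction that is at most $1$. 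Hence, were $I$ observed at every timestep, the internal regret of this procedure --- and with it $AR^{T+}(I,a) = \max_{b \in A(I)} AR^{T+}(I,a,b)$ --- would converge to $0$ with high probability.

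To drop that assumption I would replay the two-case argument of Proposition~\ref{highprobabilityconvergence}. Let $B$ bound the per-step regret, fix $\epsilon < B$ and $\alpha > 0$, choose $N'$ so that $N'$ steps of the internal regret minimizer drive its regret below $\epsilon$ with probability at least $1 - \alpha$, and set $N = BN'/\epsilon$. For $T > N$: if fewer than $N'$ of the timesteps $1, \dots, T$ satisfy $O(s^t,I,a) = 1$, then $AR^{T+}(I,a) \leq BN'/T < \epsilon$ deterministically; otherwise at least $N'$ steps of the embedded internal regret minimizer at $I$ have elapsed, so its high-probability guarantee yields $\Pr(AR^{T+}(I,a) \geq \epsilon) \leq \alpha$. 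A union bound over the finitely many pairs $(I,a)$ then gives $\max_{I,a} AR^{T+}(I,a) \to 0$ with high probability, and the earlier proposition completes the argument.

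The only real subtlety --- and it is one the analogous Proposition~\ref{highprobabilityconvergence} already finesses --- is that the number of steps taken by the embedded internal regret minimizer at $I$ is itself a random variable; this causes no difficulty because an internal regret minimizer guarantees its bound against arbitrary, adaptively generated payoff sequences, so the regret after $n$ of its own steps falls below $\epsilon$ with probability at least $1 - \alpha$ for every $n > N'$ regardless of which global timesteps those steps happen to be. Everything else is the routine bookkeeping identifying $AR^{T+}(I,a,b)$ with a coordinate of that minimizer's internal regret vector, sketched above, so I do not anticipate a genuine obstacle beyond faithfully mirroring the earlier proof.
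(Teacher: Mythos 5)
Your proposal is correct and follows essentially the same route as the paper's own proof: reduce to showing $AR^{T+}(I,a)\to 0$ for each pair, note that unobserved timesteps contribute zero regret so the restricted subsequence is exactly the internal-regret-minimized sequence, and then run the two-case $N = BN'/\epsilon$ argument borrowed from Proposition~\ref{highprobabilityconvergence}. The only differences are cosmetic: you make explicit the union bound over the finitely many $(I,a)$ pairs and the point that the regret guarantee holds against adaptively generated subsequences, both of which the paper leaves implicit.
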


\begin{proof}

By the definition of an agent form correlated equilibrium, we merely need to show that $AR^{T+}(I, a) \rightarrow 0$ for all $I, a$ pairs as $T \rightarrow \infty$ with high probability. We proceed similarly to our proof for Proposition~\ref{highprobabilityconvergence}.
Consider any $(I, a)$ pair. Say that an $(I, a)$ pair is observed at timestep $t$ if $O(s^t, I, a) = 1$. Timesteps where $(I, a)$ is not observed contribute no regret to $AR$ as per the definition. Additionally, notice that, if we let $X(T)$ represent only the subset of timesteps from $1 \dots T$ where $(I, a)$ is observed, we can write

\begin{align*}
AR^{T+}(I, a) & = \max_{b \in A(I)} \frac{1}{T} \sum_{t=1}^T  O(s^t, I, a) (u(I, s^t|_{I\rightarrow b}, s^t) - u(I, s^t)) \\
& = \max_{b \in A(I)} \frac{1}{T} \sum_{t \in X(T)} (u(I, s^t|_{I\rightarrow b}, s^t) - u(I, s^t))
\end{align*}

The last statement forms the internal regret which our procedure chooses actions to minimize, so we can infer immediately that the regrets converge to 0 with high probability if $(I, a)$ is observed at every timestep.
Let $B$ denote the maximum possible $AR$ regret. To prove convergence with high probability without that assumption, fix any $\epsilon < B$ and $\alpha > 0$. Let $N'$ be chosen to satisfy convergence with high probability condition under the assumption that $(I, a)$ is observed at every timestep. Choosing $N = \frac{BN'}{\epsilon}$ guarantees that any $n > N$ has

\[
\Pr(AR^{n+}(I, a) \geq \epsilon) \leq \alpha
\]

We show this in two cases. For any $n > N$, if less than $N'$ timesteps between $1 \dots n$ observe $(I, a)$, then our maximum possible regret is bounded by $\frac{BN'}{N} = \epsilon$ so we satisfy the condition. On the other hand, if $N'$ or more of these timesteps observe $(I, a)$, then the condition is satisfied directly by the fact that $N > N'$, as all the remaining timesteps that do not visit $(I, a)$ contribute no regret.
\end{proof}

\begin{prop}
If all players play according to the entire above procedure, then the distance between the signal $h(s) = \frac{1}{T}\sum_{t=1}^T \mathds{1}(s^t=s)$ and the set of extensive-form correlated equilibria converges to 0 as $T \rightarrow \infty$ with high probability.
\end{prop}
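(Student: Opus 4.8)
The plan is to reduce, via the regret decomposition already proved, to two facts established earlier in the paper. First I would invoke the earlier proposition relating $IR^{T+}$ to the extensive-form correlated equilibrium: to get the desired convergence it suffices to show that $\max_{I^P \in \mathcal{I}} \max_{a \in A(I^P)} IR^{T+}(I^P, a) \to 0$ with high probability. By Equation~\ref{localirdecomp} (which combines Proposition~\ref{irinequality} and Proposition~\ref{cfrdecomp}), each $IR^{T+}(I^P, a)$ is bounded by $AR^{T+}(I^P, a)$ plus a finite sum of terms $CFR^{T+}(I^P, a, I')$. Since high-probability convergence to $0$ is preserved under finite sums and finite maxima, it is enough to prove separately that every $AR^{T+}(I^P, a) \to 0$ and every $CFR^{T+}(I^P, a, I') \to 0$ with high probability.

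The $AR$ terms are handled by the preceding proposition (convergence of Part 1 to agent form correlated equilibria): its proof uses only that on-path actions follow an internal-regret minimizer, which Part 2 does not alter, so every $AR^{T+}(I^P, a) \to 0$ with high probability. For the $CFR$ terms, fix a tuple $(I^P, a, I')$ of the form occurring in Equation~\ref{localirdecomp}: $I^P$ and $I'$ are owned by the same player, $I^P$ is an ancestor of $I'$, and the action at $I^P$ along $I'$'s (unique, by perfect recall) ancestry is some $b^\star \neq a$. Then $O(s^t, I^P, a) = 1$ forces the player to have played $a \neq b^\star$ at $I^P$, so $I'$ is off the path of play at every timestep contributing to $CFR^{T}(I^P, a, I', \cdot)$; hence all such timesteps are exactly the ones on which Part 2 assigns $s^t(I')$. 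On those timesteps Part 2 chooses $s^t(I')$ by an external-regret (Blackwell) update against the counterfactual increments $u(I', s^t|_{I' \rightarrow b}, s^t) - u(I', s^t)$ over $b \in A(I')$, which is exactly the increment inside $CFR^{T}(I^P, a, I', b)$. Running the same telescoping-plus-rescaling argument as in Proposition~\ref{highprobabilityconvergence} and the preceding agent-form proposition — timesteps outside the relevant stream contribute zero regret, and the horizon $N = B N' / \epsilon$ with $B$ a per-timestep regret bound and $N'$ the threshold for the ``always visited'' stream — then yields $CFR^{T+}(I^P, a, I') \to 0$ with high probability.

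The main obstacle is the bookkeeping in the previous paragraph: one must check that the set of past timesteps Part 2 recalls at $I'$ — those on which the player replayed, at the closest on-path ancestor of $I'$, the action she is about to play again — really does match, block by block, the timesteps selected by the $CFR$ filter $O(s^t, I^P, a) R(s^t, I') = 1$, and this must hold across all of the possibly several ancestors $I^P$ that reference $I'$ in Equation~\ref{localirdecomp}. The enabling fact is perfect recall: for any on-path ancestor $J$ of $I'$ owned by the same player, the action played at every higher ancestor $I^P$ is uniquely determined — it is the action along $J$'s ancestry at $I^P$ — so each block of timesteps Part 2 groups together at $I'$ lies either entirely inside or entirely outside the $(I^P, a)$-filter, for every relevant $(I^P, a)$. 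Consequently the single Blackwell instance the player runs at $I'$ simultaneously drives to zero each of the finitely many sub-streams composing a given $CFR$ filter; summing the boundedly weighted per-stream regrets, each of which vanishes with high probability, finishes the proof.
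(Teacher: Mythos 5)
Your proposal is correct and follows essentially the same route as the paper: reduce via the $IR$ proposition and Equation~\ref{localirdecomp} to showing that the $AR$ and $CFR$ terms vanish, handle $AR$ by the Part 1 proposition, match each $CFR$ filter $O(s^t, I^P, a)R(s^t, I') = 1$ to the memory stream Part 2 minimizes over at $I'$ using perfect recall, and finish with the $N = BN'/\epsilon$ rescaling. The paper packages your block-matching bookkeeping as a lemma stating that at each timestep and infoset at most one valid pair $(I^P, a)$ contributes nonzero regret to $CFR(I^P, a, I')$ --- namely the closest on-path owned ancestor together with the action played there --- which is the same perfect-recall observation underlying your argument that each of Part 2's blocks lies entirely inside or outside a given filter.
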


\begin{proof}

Proposition~\ref{aceprop} shows that following just part 1 of the main procedure and playing arbitrarily for the remaining infosets sends the $AR$ regrets to $0$ with high probability.
Using Equation~\ref{localirdecomp}, all that remains to be shown is that part 2 of the procedure sends all the relevant $CFR$ regrets to $0$ with high probability.

Call a triplet $(I^P, a, I)$ valid if $P(I) = P(I^P)$ and there exists no strategy profile $s$ such that both of the following are true.

\begin{enumerate}
    
    \item $I$ and $I^P$ are both on the path of play according to $s$
    \item $s(I^P) = a$
\end{enumerate}
Equation~\ref{localirdecomp} shows that for $CFR^{T+}(I^P, a, I)$ to contribute to $IR^{T+}(I^P, a)$, it is a necessary but not sufficient condition for the triplet $(I^P, a, I)$ to be valid.
Say that we visit a valid triplet $(I^P, a, I)$ at timestep $t$ if $O(s^t, I^P, a)R(s^t, I) = 1$. Timesteps where $(I^P, a, I)$ is not visited contribute no regret to $CFR$ as per the definition. 

Now, consider the following lemma.

\begin{lemma}
For any given timestep $t$ and infoset $I$, there exists at most one pair $(I^P, a)$ such that both of the following conditions are true.

\begin{enumerate}
    \item $(I^P, a, I)$ forms a valid triplet
    \item Timestep $t$ contributes nonzero regret to $CFR(I^P, a, I)$
\end{enumerate}
\end{lemma}

\begin{proof}
For a given $I$, consider all possible valid triplets $(I^P, a, I)$. For a timestep $t$ to contribute nonzero regret to $CFR(I^P, a, I)$, the $O(s^t, I^P, a)$ term in the definition of CFR requires $I^P$ to be on the path of play at timestep $t$. Due to assumptions of perfect recall, there exists a single line of ancestors (from the perspective of player $P(I)$ from the root to $I$).
We claim that the unique infoset action pair (if it exists) is given by $(I^P, s^t(I^P))$ where $I^P$ is selected to be the most recent common ancestor of $I$ that is on the path of play.
By construction, any descendants of $I^P$ will not be on the path of play (and thus never satisfy the $O(s^t, I^P, a) = 1$ condition for any a). Any ancestor of $I^P$ (call it $I'$) will satisfy the observability condition with the pair $(I', s^t(I'))$ but will not be a valid triplet since $s^t(I')$ is the action that leads to $I$ (by perfect recall) and thus $(I', s^t(I'), I)$ will not be a valid triplet. Finally, if $a$ is selected to be an action other than $s^t(I')$ then the observability condition will again be violated since by construction $O(s^t, I', a) = 0$.
\end{proof}

A direct application of the above lemma shows that each timestep is responsible for minimizing a single $CFR(I^P, a, I)$ for each infoset $I$, meaning there are no conflicts between the regrets being minimized. Additionally, notice that, if we let $X(T)$ represent only the subset of timesteps where $(I^P, a, I)$ is observed, we can write
\begin{align*}
CFR^{T+}(I^P, a, I) & = \max\limits_{b \in A(I)} \frac{1}{T} \sum_{t=1}^T O(s^t, I^P, a) R(s^t, I) (u(I, s^t |_{I \rightarrow b}, s^t) - u(I, s^t)) \\
& = \max\limits_{b \in A(I)} \frac{1}{T} \sum_{t \in X(T, I^P, a, I)} (u(I, s^t |_{I \rightarrow b}, s^t) - u(I, s^t)) 
\end{align*}
The last statement forms the external regret which our procedure directly minimizes, so we immediately can infer that the regrets converge to 0 with high probability if $(I^P, a, I)$ is observed at every timestep.
Let $B$ denote the maximum possible $CFR$ regret.
To prove convergence with high probability for the overall sequence, fix any $\epsilon < B$ and $\alpha > 0$.
Let $N'$ be chosen to satisfy convergence with high probability condition under the assumption that every $(I^P, a, I)$ is observed at every timestep. Choosing $N = \frac{BN'}{\epsilon}$ guarantees that any $n > N$ has

\[
\Pr(CFR^{T+}(I^P, a, I) \geq \epsilon) \leq \alpha
\]

\noindent We show this in two cases. For any $n > N$, if less than $N'$ timesteps between $1 \dots n$ observe $(I^P, a, I)$, then our maximum possible regret is given by $\frac{BN'}{N} = \epsilon$ so we are guaranteed to satisfy the condition with probability $1$. On the other hand, if $N'$ or more of these timesteps observe $(I^P, a, I)$, then the condition is satisfied directly by the fact that $N > N'$, as all remaining timesteps that do not visit $(I^P, a, I)$ contribute 0 regret.
\end{proof}

\section{Acknowledgements}
I am tremendously grateful to my advisor Gabriel Carroll for his support and guidance throughout the process of writing this thesis, as well as Marcelo Clerici-Arias for making this possible by directing the Economics Honors program. I would also like to thank Daniel Duckworth, Jeffrey Chang, Felix Wang, and Xiaoyu He for their helpful suggestions and comments, and my ECON 180 classmates for helping me learn game theory.

\clearpage
\xpatchbibmacro{date+extradate}{%
  \printtext[parens]%
}{%
  \setunit{\addperiod\space}%
  \printtext%
}{}{}
\printbibliography

\end{document}